\documentclass[runningheads]{llncs}
\usepackage{graphicx}
\usepackage{stmaryrd}
\usepackage{amsmath}
\usepackage{amssymb}
\usepackage{xcolor}
\usepackage{xspace}
\usepackage{listings}
\usepackage{lstautogobble}
\usepackage{todonotes}
\usepackage{mathtools}
\usepackage{mybnf}
\usepackage{tabularx}
\usepackage{mathpartir}
\usepackage{subcaption}
\usepackage{wrapfig}
\usepackage{booktabs}
\usepackage[inline]{enumitem}
\usepackage{hyperref}
\usepackage{cleveref}
\usepackage[firstpage]{draftwatermark}
\usepackage{orcidlink}
\usepackage{marvosym}
\usepackage{ifthen}

\newboolean{fullversion}
\setboolean{fullversion}{true}

\let\oldFootnote\footnote
\newcommand\nextToken\relax

\renewcommand\footnote[1]{%
    \oldFootnote{#1}\futurelet\nextToken\isFootnote}

\newcommand\isFootnote{%
    \ifx\footnote\nextToken\textsuperscript{,}\fi}

\SetWatermarkAngle{0}
\SetWatermarkText{\raisebox{13.3cm}{
\hspace{0.1cm}
\href{https://doi.org/10.5281/zenodo.15222546}{\includegraphics{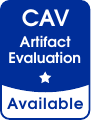}}
\hspace{8cm}
\includegraphics{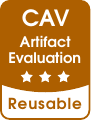}
}}




\usetikzlibrary{tikzmark,decorations,decorations.pathreplacing,arrows.meta,backgrounds,patterns}

\usetikzlibrary{calc}

\newcommand{\RR}{\mathbb{R}}
\newcommand{\TT}{\mathbb{T}}
\newcommand{\VV}{\mathbb{V}}

\newcommand\syn[1]{\texttt{\color{brown}{#1}}}
\newcommand{\srin}[0]{\mathit{SR}_\texttt{in}}
\newcommand{\srout}[0]{\mathit{SR}_\texttt{out}}
\newcommand{\sr}[0]{\srin \uplus \srout}
\newcommand{\world}[0]{\mathcal{W}}
\newcommand{\inputs}{\world_\mathit{in}}

\newcommand{\sliceFn}{\operatorname{slice}}
\newcommand{\nextDlFn}{\mathit{nextDl}}
\newcommand{\scheduleTy}{\mathit{Deadline}}
\newcommand{\memoryTy}{\mathit{Memory}}
\newcommand{\prefixTy}{\mathit{Prefix}}
\newcommand{\prefixMapTy}{\mathit{PrefixMap}}

\newcommand{\evalExp}[1][\p]{\Downarrow^{#1}_t}
\newcommand{\evalGuard}[1][\p]{\Downarrow^{#1}_{I,D,t}}
\newcommand{\evalStmt}[1][\p]{\evalStmtP{#1}{I,D,t}}
\newcommand{\evalStmtP}[2]{\Downarrow^{#1}_{#2}}

\newcommand{\translate}{\operatorname{translate}_\varphi}

\newcommand{\rtlola}{\mbox{RTLola}\xspace}
\newcommand{\p}{\mathit{inst}}
\newcommand{\instExp}{\mathit{inst}_{\mathit{exp}}}
\newcommand{\pre}{\mathit{pre}}

\newcommand{\streamir}{\mbox{StreamIR}\xspace}
\allowdisplaybreaks

\definecolor{bluekeywords}{rgb}{0.13, 0.13, 1}
\definecolor{greentypes}{rgb}{0, 0.5, 0}
\definecolor{orangecomments}{rgb}{1, 0.5, 0.1}
\definecolor{redstrings}{RGB}{171, 114, 2}
\definecolor{graynumbers}{rgb}{0.5, 0.5, 0.5}
\definecolor{goldcomments}{rgb}{0.6, 0.4, 0.08}

\lstdefinelanguage{Lola}{
  keywords=[0]{input, output, trigger, constant, import, spawn, eval, close, with, when, schedule},
  keywordstyle=[0]\bfseries\color{bluekeywords},
  keywords=[1]{if, then, else, aggregate, defaults, offset, by, or, to, sin, cos, abs, hold, over, using, min, max, in},
  keywords=[2]{Variable, String, Int, Int8, Int64, UInt, UInt8, UInt64, Bool, Float32, Float64, Float, @1Hz, @5Hz, @10Hz, @100mHz, @1kHz, @1min},
  keywordstyle=[2]\color{greentypes},
  	moredelim=[is][\textcolor{greentypes}]{|}{|},
    sensitive=false,
    comment=[l]{//},
    morecomment=[s]{/*}{*/},
    morestring=[b]',
    morestring=[b]"
}

\lstdefinelanguage{ir}{
  keywords={skip, shift, input, input?, spawn, eval, close, if, then, else, iterate, assign, schedule, dynamic, get, syn, const, self, window, global, local, get, ;, $\|$ |}
}

\lstdefinestyle{ir}{
  language=ir,
  basicstyle=\rmfamily\itshape,
  keywordstyle=\upshape\ttfamily\color{brown},
  alsoletter=?;,
  literate=
    {(}{{\normalfont(}}1
    {]}{{\normalfont)}}1
}

\lstset{
    autogobble,
    language={Lola},
    columns=fullflexible,
    showspaces=false,
    showtabs=false,
    breaklines=true,
    showstringspaces=false,
    breakatwhitespace=true,
    escapeinside={(*@}{@*)},
    commentstyle=\color{orangecomments},
    keywordstyle=[1]\color{bluekeywords},
    stringstyle=\color{redstrings},
    numberstyle=\color{graynumbers},
    basicstyle=\ttfamily,
    frame=l,
    framesep=5pt,
    xleftmargin=6pt,
    tabsize=4,
    captionpos=b,
    mathescape,  
    stepnumber=1,
}

\begin{document}
\title{An Intermediate Program Representation for Optimizing Stream-Based Languages}
\titlerunning{Optimizing Stream-Based Languages}
\author{
  Jan Baumeister\textsuperscript{(\Letter)}\orcidlink{0000-0002-8891-7483}
  \and Arthur Correnson\orcidlink{0000-0003-2307-2296}
  \and \\ Bernd Finkbeiner\orcidlink{0000-0002-4280-8441}
  \and Frederik Scheerer\orcidlink{0009-0007-8115-0359}
}
\authorrunning{Baumeister et~al.}

\institute{CISPA Helmholtz Center for Information Security,\\Saarbrücken, Germany\newline
\email{\{jan.baumeister, arthur.correnson,\\finkbeiner, frederik.scheerer\}@cispa.de}}
\maketitle
\begin{abstract}
Stream-based runtime monitors are safety assurance tools that check at runtime whether the system’s behavior satisfies a formal specification.
Specifications consist of stream equations, which relate input streams, containing sensor readings and other incoming information, to output streams, representing filtered and aggregated data.
This paper presents a framework for the stream-based specification language \rtlola.
We introduce a new intermediate representation for stream-based languages, the \streamir, which, like the specification language, operates on streams of unbounded length; while the stream equations
are replaced by imperative programs.
We present a set of optimizations based on static analysis of the specification and have implemented an interpreter and a compiler for several target languages.
In our evaluation, we measure the performance of several real-world case studies.
The results show that the new \streamir framework reduces the runtime significantly
compared to the existing \rtlola interpreter.
We evaluate the effect of the optimizations and show that significant performance gains are possible beyond the optimizations of the target language’s compiler.
While our current implementation is limited to \rtlola, the \streamir is designed to accommodate other stream-based languages, enabling their interpretation and compilation into all available target languages.

\keywords{Runtime Verification \and Stream-based Monitoring \and Real-time Properties \and Intermediate Representation}
\end{abstract}
\section{Introduction}
\label{sec:intro}

Frameworks for runtime monitoring must strike a careful balance between safety and performance. On the one hand, runtime monitors are responsible for raising alarms about potential risks and initiating mitigation procedures in real time; their safety and correctness are, therefore, critical. On the other hand, monitors are often run on platforms with very limited resources, such as the onboard computer of an unmanned aircraft, where efficiency is crucial.

To ensure the correctness of the monitor, a common approach is to follow the paradigm of \emph{specification-based monitoring}, where the monitor is not programmed in a programming language but rather based on a formal specification.
Many frameworks store the specification in memory and then \emph{interpret} the specification against the incoming data at runtime~\cite{faymonville2019streamlab}; this, however, causes substantially sub-optimal performance.
The alternative approach is to \emph{translate} the specification into a target programming language, such as Rust, and then rely on a standard compiler for the target language to produce executable code~\cite{finkbeiner2020verified,kallwies2022tessla,baumeister2019fpga}.
Even though translated monitors often perform significantly better than an interpreter, compilers for general-purpose programming languages do not optimize based on the specific properties of the specification language. This approach therefore still leaves potential for optimization on the table.

In this paper, we present an optimization framework for the class of stream-based specification languages~\cite{d2005lola,convent2018tessla,gorostiaga2018striver}.
Such specifications consist of stream equations, which relate input streams, containing sensor readings and other incoming information, to output streams, representing filtered and aggregated data. 
Stream-based monitoring has been applied to the safety assurance of cyber-physical systems, including monitoring of exhaust emissions in cars~\cite{DBLP:conf/tacas/BiewerFHKSS21} and the safety of unmanned aircraft~\cite{DBLP:conf/cav/BaumeisterFKLMST24,DBLP:conf/cav/BaumeisterFSST20}.
Stream-based languages have properties that are relevant for code optimization but are not found in general-purpose programming languages.
For example, the value of a particular expression in a given time step of a stream-based monitor is guaranteed to be the same, regardless of other stream evaluations in between.
As a result, if-statements can easily be combined or moved outside or inside other statements.

Our framework introduces a new intermediate representation, the \streamir, specifically designed to represent the relational stream-based specifications in an imperative form and enable optimizations before the code is translated to the target language.
We provide an interpreter and a compiler to Rust and Solidity.
The optimizations are implemented as a set of rewrite rules that exploit the specific properties of stream-based languages.
As the source language, our framework accepts \rtlola~\cite{faymonville2019streamlab,baumeister2024tutorial} specifications, which is a famous representative of the class of stream-based specification languages, but could easily be extended to support other stream-based languages like TeSSLa~\cite{convent2018tessla} or Striver~\cite{gorostiaga2018striver}. 

We evaluate our approach on a set of specifications from \rtlola case studies~\cite{DBLP:conf/cav/BaumeisterFKLMST24,baumeister2025fairness,baumeister2024tutorial,Scheerer/21}.
Interpreting the optimized \streamir code significantly outperforms the state-of-the-art \rtlola interpreter~\cite{faymonville2019streamlab}. This is perhaps not surprising, because the  \rtlola interpreter does not attempt to optimize the specification before the execution. However, our framework also outperforms the direct compilation of RTLola into Rust, demonstrating that the \streamir optimizations gain an additional speed-up on top of the optimizations already made by the Rust compiler.
The situation is similar in our experiments with monitoring smart contracts in Solidity.
Here, we assess our compiler by measuring the gas usage. The \streamir optimizations again achieve substantial performance gains beyond the optimizations of the Solidity compiler.

The rest of this paper is structured as follows:
After giving an overview on \rtlola in \Cref{sec:rtlola}, \Cref{sec:ir} introduces the intermediate representation, describes the translation from \rtlola specifiations to the StreamIR, and presents the StreamIR optimizations.
\Cref{sec:implementation} provides details about the implementation and the evaluation of our approach.

\subsubsection{Related Work.}
\label{sec:relatedWork}
Runtime monitoring is a dynamic verification approach that has been applied to a variety of domains~\cite{junges2021runtime,10.1007/978-3-031-37703-7_17}, and there are several monitoring tools and case studies~\cite{perez2020copilot,DBLP:conf/cav/BaumeisterFSST20,DBLP:conf/cav/BaumeisterFKLMST24}.
Monitoring with stream-based specifications was pioneered by Lola~\cite{d2005lola}, which was later extended to Lola2.0~\cite{faymonville2016stream} and \rtlola~\cite{faymonville2019streamlab,baumeister2024tutorial}.
Stream-based specification languages are related to synchronous programming languages such as Lustre~\cite{halbwachs1991synchronous}, Esterel~\cite{boussinot1991esterel} and Signal~\cite{gautier1987signal}.
In contrast to them, stream-based specifications are descriptive languages that use stream equations to describe temporal properties.
Other extensions of Lola are TeSSLa~\cite{convent2018tessla}, Striver~\cite{gorostiaga2018striver} or Copilot~\cite{DBLP:conf/fm/PerezGD24}.
Several compilers for stream-based languages exist, including compilers for Lola~\cite{finkbeiner2020verified} and TeSSLa~\cite{kallwies2022tessla} to Rust.
Other more specific compilers are a compilation~\cite{baumeister2019fpga} from \rtlola to the hardware description language VHDL~\cite{baumeister2019fpga} or a compilation from Copilot~\cite{pike2010copilot} to the Atom language, a domain-specific language for embedded hard real-time applications.
While some of these tools employ an intermediate representation (IR) to support different target languages, all are built for their specific specification language.

The concept of compiler construction through an IR is well-established.
The most well-known IR is LLVM, the basis for most modern compilers.
However, to the best of our knowledge, a general-purpose IR specifically designed for stream-based specifications does not yet exist.
Optimizations of stream-based specifications have been studied before.
In both RTLola~\cite{baumeister2020automatic} and TeSSLa~\cite{kallwies2022optimizing}, there exist approaches that describe optimizations specifically designed for their specification language.
This paper follows this idea, but, compared to the other approaches, the optimizations are defined on the more general StreamIR, which describes an imperative program, not a set of equations.
This approach is complementary to the optimizations based on the specification language.

\section{RTLola}
\label{sec:rtlola}

An \rtlola specification defines a set of streams, each representing an infinite sequence of values.
We differentate between \emph{input streams} which capture external data, and \emph{output streams} that perform computations based on current and past stream values.
Consider the specification in \Cref{fig:example:spec} which monitors a waypoint mission of an autonomous drone.
The drone is provided with new waypoints through an input stream and must reach each waypoint within 10~seconds.
We have exemplified an evaluation of the specifiation in \Cref{fig:example:spec_run}.
\begin{figure}[t]
\begin{lstlisting}
input pos : (UInt64, UInt64)
input wp : (UInt64, UInt64)
output moving
  eval |@pos| with pos != pos.offset(by: -1, or: pos)
output reached(current)
  spawn |@wp| with wp
  eval |@pos| with $\mathit{dist}$(pos, current) < $\varepsilon$
  close |@pos| when reached(current)
output missed_wp(current)
  spawn |@wp| with wp
  eval |@Local(10s)| with $\neg$reached(current).aggregate(over: 10s, using: $\exists$)
  close |@pos| when reached(current)
\end{lstlisting}
\caption{RTLola specification checking whether a drone reaches new waypoints within 10 seconds.}
\label{fig:example:spec}		
\end{figure}


The specification introduces two input streams: \lstinline!pos!, representing the drone's current position, and \lstinline!wp!, representing the position of new waypoints.
The output stream \lstinline!moving! determines whether the drone is currently in motion by comparing its current position to the previous one.
This is accomplished using an offset lookup -- a temporal operator in \rtlola that accesses past stream values.
The accesses are represented in \Cref{fig:example:spec_run} through a series of dashed arrows.
Since previous values don't exist initially, indicated by the question mark in the top left of the figure, the offset operator requires a default value.
The output stream \lstinline!reached! checks for each waypoint whether it was reached by the drone.
In contrast to the previous output stream, this stream is parameterized to describe a set of streams, called \emph{stream instances} -- each tracking a specific waypoint.

Instances are created using the \emph{spawn} clause, where the expression computes the \emph{parameter} of the instance.
Similarly, the \emph{close} clause defines when instances are removed.
In this example, a new instance is spawned for every new waypoint and closed once the corresponding waypoint is reached.
The \emph{evaluation} of parameterized streams is applied to every instance, in our example, if the distance between the current position and the waypoint, represented with the instance's parameter, is smaller than a threshold.
The figure depicts two instances of the \lstinline!reached! stream, each respectively monitoring whether waypoint $(5,3)$ and $(7,6)$ have been reached.
The output stream \lstinline!missed_wp! tracks whether a waypoint was missed -- that is, not reached within a time bound.
This stream is also parameterized over the waypoints, but in contrast to the previous streams, the evaluation is applied at a frequency.
More precicely, each stream instance is evaluated every 10~seconds and checks if the waypoint was not reached within the last 10~seconds.

\begin{figure}[t]
  \begin{center}
  \scalebox{1.0}{
  \begin{tikzpicture}[value/.style={draw,circle,fill=white,minimum width=5mm,inner sep=0.5mm},input/.style={value,rectangle,minimum height=4.5mm,minimum width=4.5mm}]
    \def\shift{-2mm}
    \node[anchor=east] (pos_label) {\lstinline!pos!};
    \node[below=8mm of pos_label.east,anchor=east] (moving_label) {\lstinline!moving!};
    \node[below=6mm of moving_label.east,anchor=east] (wp_label) {\lstinline!wp!};
    \node[below=6mm of wp_label.east,anchor=east] (reached1_label) {\lstinline!reached((5,3))!};
    \node[below=6mm of reached1_label.east,anchor=east] (missedwp1_label) {\lstinline!missed_wp((5,3))!};
    \node[below=6mm of missedwp1_label.east,anchor=east] (reached2_label) {\lstinline!reached((7,6))!};
    \node[below=6mm of reached2_label.east,anchor=east] (missedwp2_label) {\lstinline!missed_wp((7,6))!};

    \draw[{Bracket[width=3mm]}-] (pos_label.east) -- ++ (8.2,0);
    \draw[{Bracket[width=3mm]}-] (wp_label.east) -- ++ (8.2,0);
    \draw[{Bracket[width=3mm]}-] (moving_label.east) -- ++ (8.2,0);

    \foreach \i/\v [count=\xi] in {1/{(5,2)},4/{(5,2)},5/{(4,3)},5.8/{(4,3)},7.5/{(5,3)}} {
      \node[input,xshift=\shift] (pos\xi) at (\i, |- pos_label) {\scriptsize \v};
    }

    \node[value] (moving1) at (pos1 |- moving_label) {$\bot$};
    \node[value] (moving2) at (pos2 |- moving_label) {$\bot$};
    \node[value] (moving3) at (pos3 |- moving_label) {$\top$};
    \node[value] (moving4) at (pos4 |- moving_label) {$\bot$};
    \node[value] (moving5) at (pos5 |- moving_label) {$\top$};

    \draw[->,densely dashed,shorten >=0.3mm] (moving1) -- (pos1);
    \draw[->,densely dashed,shorten >=0.3mm] (moving2) -- (pos2);
    \draw[->,densely dashed,shorten >=0.3mm] (moving2) -- (pos1);
    \draw[->,densely dashed,shorten >=0.3mm] (moving3) -- (pos3);
    \draw[->,densely dashed,shorten >=0.3mm] (moving3) -- (pos2);
    \draw[->,densely dashed,shorten >=0.3mm] (moving4) -- (pos4);
    \draw[->,densely dashed,shorten >=0.3mm] (moving4) -- (pos3);
    \draw[->,densely dashed,shorten >=0.3mm] (moving5) -- (pos5);
    \draw[->,densely dashed,shorten >=0.3mm] (moving5) -- (pos4);

    \draw[->,densely dashed,shorten >=1.5mm] (moving1) -- ++(-5.5mm,4.5mm) node {?};
    
    \foreach \i/\v [count=\xi] in {2/{(5,3)},6.5/{(7,6)}} {
      \node[input,xshift=\shift] (wp\xi) at (\i, |- wp_label) {\scriptsize \v};
    }

    \draw[{Bracket[width=3mm]}-{Bracket}] (reached1_label -| wp1) -- (reached1_label -| pos5);
    \draw[{Bracket[width=3mm]}-] (reached2_label -| wp2) -- ($(reached2_label.east) + (8.2,0)$);

    \draw[{Bracket[width=3mm]}-{Bracket[width=3mm]}] (missedwp1_label -| wp1) -- (missedwp1_label -| pos5);
    \draw[{Bracket[width=3mm]}-] (missedwp2_label -| wp2) -- ($(missedwp2_label.east) + (8.2,0)$);

    \foreach \i in {pos_label,moving_label,wp_label,reached2_label,missedwp2_label} {
      \draw[dotted] (8.2, |- \i) -- ++(0.5,0);
    }

    \foreach \i in {1,2} {
      \node[value] (mweval\i) at ($(missedwp1_label -| wp1) + (\i*2.5, 0)$) {$\top$};
    }

    \draw[transform canvas={yshift=-2mm},decorate,decoration={brace,mirror,amplitude=2mm}] (missedwp1_label -| wp1) -- node[below=2mm] {\scriptsize 10 s} (mweval1);
    \draw[transform canvas={yshift=-2mm},decorate,decoration={brace,mirror,amplitude=2mm}] (mweval1) -- node[below=2mm] {\scriptsize 10 s} (mweval2);

    \node[value] (reached1_1) at (pos2 |- reached1_label) {$\bot$};
    \node[value] (reached1_2) at (pos3 |- reached1_label) {$\bot$};
    \node[value] (reached1_3) at (pos4 |- reached1_label) {$\bot$};
    \node[value] (reached1_4) at (pos5 |- reached1_label) {$\top$};
    \node[value] (reached2_1) at (pos5 |- reached2_label) {$\bot$};

    \coordinate (rwp) at ([xshift=2.5cm]missedwp2_label -| wp2);
    \begin{scope}[on background layer]
      \fill[fill=lightgray] ([xshift=-2.5cm]mweval1 |- reached1_label.center) -- (mweval1.center) -- (mweval1.center |- reached1_label.center);
      \fill[fill=lightgray] ([xshift=-2.5cm]mweval2 |- reached1_label.center) -- (mweval2.center) -- (mweval2.center |- reached1_label.center);

      \clip (0,0) rectangle (missedwp2_label -| 8.2,);
      \fill[fill=lightgray] ([xshift=-2.5cm]rwp |- reached2_label.center) -- (rwp.center) -- (rwp.center |- reached2_label.center);
    \end{scope}
    \begin{scope}
      \clip (missedwp2_label -| 8.2,) rectangle ++(0.45,2);
      \fill[pattern=north west lines, pattern color=lightgray] ([xshift=-2.5cm]rwp |- reached2_label.center) -- (rwp.center) -- (rwp.center |- reached2_label.center);
    \end{scope}
  \end{tikzpicture}}
  \end{center}

  \caption{An evaluation of the RTLola specification from \Cref{fig:example:spec}.}
  \label{fig:example:spec_run}
\end{figure}

Note that for each clause -- spawn, eval, and close -- \rtlola has two types of filters: pacings and dynamic filter.
The \emph{pacing}, specified with an \lstinline!|@|!-symbol, describes whether the clause is event-based or periodic.
For \emph{event-based} clauses, the clause is evaluated when a new input arrives, such as the evaluation of \lstinline!moving!, which receives new values for every new value of \lstinline!pos!.
\emph{Periodic} clauses are evaluated with a fixed frequency, such as the evaluation of \lstinline!missed_wp! that is executed every 10 seconds.
\emph{Dynamic filters} are boolean stream expressions and follow after the \lstinline!when!-keyword.
They describe a condition based on stream values; here, we close a stream instance upon reaching the waypoint. 

From a specification, we can derive a \emph{Dependency Graph} describing the relation between streams.
Analysis on this graph returns the required memory and a partial order, indicating which evaluation steps -- called \emph{Tasks} -- have to be done in which order.
These tasks either spawn, shift, evaluate, or close stream instances and correspond to the clauses in the specification.
Here, \emph{shifting} an instance reserves space for a new value.
The order can be transformed into a list of \emph{layers}, where the layers are evaluated sequentially and all tasks of one layer in parallel.
For a more detailed explanation, we refer to the \rtlola tutorial~\cite{baumeister2024tutorial}.
\section{Stream Intermediate Representation (StreamIR)}
\label{sec:ir}

The \streamir, a new intermediate representation for stream-based specification languages, captures the imperative semantics of those languages and describes operations specific to real-time stream-based monitors, such as spawning, evaluating, and closing a stream instance.
Besides typical imperative statements, such as conditionals in the form of \syn{if}-\syn{then}-\syn{else}, it includes stream-based specific statements or expressions.
The \emph{guards} in conditions, for example, extend standard logical operators with constructs to check the presence of inputs to model event-based streams or constructs to reason about deadlines of periodic streams.
Using this representation, a StreamIR monitor $\syn{loop} \syn{\{} stmt \syn{\}}$ describes a statement that is executed for each input event.
\ifthenelse{\boolean{fullversion}}{
The complete syntax is defined in \Cref{app:syntax}.
}{
The complete syntax, semantics and optimizations are provided in the full version~\cite{full_version} of this paper.
}

\begin{example}
	The program in \Cref{fig:example:ir} describes a monitor for the specification from \Cref{fig:example:spec}.
	The program is split into three parts following the three pacings of the specification, \lstinline{|@pos|}, \lstinline{|@wp|}, and \lstinline{|@Local(10s)|}.
	First, the program handles the input \lstinline|wp|, and spawns the parameterized output streams.
	Then, it shifts and evaluates the \lstinline|pos| input stream and the non-parameterized \lstinline|moving| output stream before iterating over the instances of the \lstinline|reached| stream.
	Since the stream instances of \lstinline|missed_wp| are closed with the same condition as the \lstinline|reached| stream instances, the monitor handles the closing of these instances in parallel.
	The last guard iterates over the \lstinline|missed_wp| output stream to evaluate its instances.
  Given that the deadline is local, i.e. relative to the spawn of each instance, the guard checks the deadline of each instance individually.
\end{example}

\begin{figure}[t]
  \begin{lstlisting}[style=ir]
    if input? wp then
      shift wp ; input wp ; (spawn reached (syn wp] $\color{brown}\|$ spawn missed_wp (syn wp]] ;
    if input? pos then
      shift pos ; input pos ; shift moving ;
      eval moving (syn pos $\ne$ get pos -1 (syn pos]] ; 
      iterate reached
        shift reached ; eval reached dist(syn pos, self] < $\varepsilon$ ;
        if dynamic syn reached self then
          close reached $\color{brown}\|$ close missed_wp ;
    iterate missed_wp
      if local 10s missed_wp then
        shift missed_wp ; eval missed_wp $\neg$(window reached self 10s $\exists$]
    \end{lstlisting}
\caption{A StreamIR program of the example specification in \Cref{fig:example:spec}.}
\label{fig:example:ir}
\end{figure}

StreamIR monitors are defined over a sequence of \emph{Memories}.
Each memory contains the already computed values -- called \emph{prefix} -- of each stream instance and the next \emph{deadline} for periodic stream instances, the timestamp of the next stream evaluation according to the frequency.
We differentiate between global frequencies that start with the beginning of the monitor and local frequencies that start with the spawn of a stream instance.

Using the memory, we define inference rules to evaluate expressions and guards, i.e., $(M, \mathit{expr}) \evalExp \mathit{val}$, and statements, i.e., $(M, \mathit{stmt}) \evalStmt[inst] M'$.
A step of the monitor $((M,D),\mathit{stmt}) \Rightarrow_{I, t} (M', D')$ describes a sequence of statement executions, one execution for each passed deadline and one to process the input.
For a given input sequence, \emph{a monitor} then describes the sequence of memories according to the inference rules while starting with an initial memory.
\ifthenelse{\boolean{fullversion}}{
The list of all inference rules is given in \Cref{app:ir:sema}.
}{}

Some programs are ill-formed, i.e., the correctness of the program relies on constraints imposed on the input.
To avoid such programs, we only consider \emph{well-defined} programs, i.e., each infinite sequence of inputs has a unique output.
\subsection{Translation from \rtlola to StreamIR}
\label{sec:translation}
\newcommand{\graybox}{\tikz[baseline=1]\draw[gray,thick] (0,0) rectangle (0.2,0.2);}
\newcommand{\cyanbox}{\tikz[baseline=1]\draw[cyan,thick] (0,0) rectangle (0.2,0.2);}
\newcommand{\redbox}{\tikz[baseline=1]\draw[red,thick] (0,0) rectangle (0.2,0.2);}
\newcommand{\greenbox}{\tikz[baseline=1]\draw[green!50!black,thick] (0,0) rectangle (0.2,0.2);}
\newcommand{\bluebox}{\tikz[baseline=1]\draw[blue,thick] (0,0) rectangle (0.2,0.2);}

The translation from an \rtlola specification to the StreamIR follows the technique illustrated in \Cref{fig:translation:overview}.
Here, each stream in the specification is translated into a set of small StreamIR snippets.
Input streams are translated to a statement shifting the stream and writing the new value to memory.
Output streams are translated into four statements, one for each task -- spawn, shift, eval, and close.
The translated statements are then sorted according to their layers, where all statements in a layer are evaluated in parallel while the layers are concatenated sequentially.
The translation also generates an initial memory consisting of the prefixes and the deadlines.
The initial prefix of every input stream is assigned to an empty list whereas every output stream instance is marked as not-spawned.
The initial deadline assigns every global frequency to its first deadline.
\ifthenelse{\boolean{fullversion}}{%
The formal definition of the translation and the generation of the initial memory is defined in \Cref{app:translation}.
}{}

\begin{figure}[t]
	\centering
	\scalebox{0.95}{
	\begin{tikzpicture}
		\node[anchor=west] (o1) {\lstinline[basicstyle=\ttfamily]!output o!};
		\node[thick,draw=red,below=6mm of o1.west,anchor=west,xshift=5mm] (o2) {\lstinline[basicstyle=\ttfamily]!spawn ...!};
		\node[thick,draw=blue,below=6mm of o2.west,anchor=west] (o3) {\lstinline[basicstyle=\ttfamily]!eval ...!};
		\node[thick,draw=green!50!black,below=6mm of o3.west,anchor=west] (o4) {\lstinline[basicstyle=\ttfamily]!close ...!};
	
		\node[thick,draw=red,align=left,font=\scriptsize,above right=-0.2cm and 2cm of o1,minimum width=2cm] (ir1) {\syn{if}~...~\\\quad\syn{spawn}~\texttt{o}~...};
		\node[thick,draw=cyan,align=left,font=\scriptsize,above=0.5mm of ir1,minimum width=2cm] (ir0) {\syn{if}~...~\\\quad\syn{shift}~i~\syn{;}\\\quad\syn{input}~i};
		\node[thick,draw=blue,align=left,font=\scriptsize,below=0.5mm of ir1.south west,anchor=north west,minimum width=2cm] (ir2) {\syn{iterate}~...~\\\quad\syn{if}~...\\\quad\syn{eval}~\texttt{o}~...};
		\node[thick,draw=blue,align=left,font=\scriptsize,below=0.5mm of ir2.south west,anchor=north west,minimum width=2cm] (ir3) {\syn{iterate}~...~\\\quad\syn{if}~...\\\quad\syn{shift}~\texttt{o}};
		\node[thick,draw=green!50!black,align=left,font=\scriptsize,below=0.5mm of ir3.south west,anchor=north west,minimum width=2cm] (ir4) {\syn{iterate}~...~\\\quad\syn{if}~...\\\quad\syn{close}~\texttt{o}};
	
		\node[thick,draw=cyan,anchor=west] (i1) at (o1.west |- ir0) {\lstinline[basicstyle=\ttfamily]!input i ...!};
	
		\draw[->] (i1) to (ir0);
		\draw[->] (o2) to[out=0,in=180] (ir1);
		\draw[->] ([yshift=1mm]o3.east) to[out=0,in=180] (ir2);
		\draw[->] ([yshift=-1mm]o3.east) to[out=0,in=180] (ir3);
		\draw[->] (o4) to[out=0,in=180] (ir4);
	
		\node[font=\scriptsize,text width=1.8cm,align=left,minimum width=2cm,minimum height=0.6cm,draw,right=of ir0] (l1) {1~\cyanbox~\graybox~\graybox~\dots};
		\node[font=\scriptsize,text width=1.8cm,align=left,minimum width=2cm,minimum height=0.6cm,draw,right=of ir4] (l6) {6~\graybox~\greenbox~\graybox~\dots};
		\node[font=\scriptsize,text width=1.8cm,align=left,minimum width=2cm,minimum height=0.6cm,draw] (l2) at ($(l1)!{1/5}!(l6)$) {2~\graybox~\graybox~\redbox~\dots};
		\node[font=\scriptsize,text width=1.8cm,align=left,minimum width=2cm,minimum height=0.6cm,draw] at ($(l1)!{2/5}!(l6)$) (l3) {3~\bluebox~\graybox~\graybox~\dots};
		\node[font=\scriptsize,text width=1.8cm,align=left,minimum width=2cm,minimum height=0.6cm,draw] at ($(l1)!{3/5}!(l6)$) (l4) {4~\graybox~\graybox~\graybox~\dots};
		\node[font=\scriptsize,text width=1.8cm,align=left,minimum width=2cm,minimum height=0.6cm,draw] at ($(l1)!{4/5}!(l6)$) (l5) {5~\bluebox~\graybox~\graybox~\dots};
	
		\draw[->] (ir0) to[out=0,in=180] (l1);
		\draw[->] (ir1) to[out=0,in=180] (l2);
		\draw[->] (ir3) to[out=0,in=180] (l3);
		\draw[->] (ir2) to[out=0,in=180] (l5);
		\draw[->] (ir4) to[out=0,in=180] (l6);

		\draw[decorate,decoration={brace,amplitude=2.5mm,raise=2.3mm}] (l1.north east) to coordinate (c) (l6.south east);

		\node[right=7mm of c,anchor=west,yshift=1.25cm] (p1) {(~\cyanbox~\syn{$\|$}~\graybox~\syn{$\|$}~\graybox~\syn{$\|$}~\dots~)~\syn{;}};
		\node[below=5mm of p1.west,anchor=west] (p2) {(~\graybox~\syn{$\|$}~\graybox~\syn{$\|$}~\redbox~\syn{$\|$}~\dots~)~\syn{;}};
		\node[below=5mm of p2.west,anchor=west] (p3) {(~\bluebox~\syn{$\|$}~\graybox~\syn{$\|$}~\graybox~\syn{$\|$}~\dots~)~\syn{;}};
		\node[below=5mm of p3.west,anchor=west] (p4) {(~\graybox~\syn{$\|$}~\graybox~\syn{$\|$}~\graybox~\syn{$\|$}~\dots~)~\syn{;}};
		\node[below=5mm of p4.west,anchor=west] (p5) {(~\bluebox~\syn{$\|$}~\graybox~\syn{$\|$}~\graybox~\syn{$\|$}~\dots~)~\syn{;}};
		\node[below=5mm of p5.west,anchor=west] (p6) {(~\graybox~\syn{$\|$}~\greenbox~\syn{$\|$}~\graybox~\syn{$\|$}~\dots~)};
	\end{tikzpicture}}
	\caption{
		Overview of the translation from \rtlola specifications to StreamIR monitors.
	}
	\label{fig:translation:overview}
	\vspace{-0.5cm}
	\end{figure}

\subsection{Optimizations}
\label{sec:rewrite}
The translation of the monitors shown in \Cref{sec:translation} produces specific patterns, allowing for optimizations tailored to these patterns.
The \streamir representation of the monitor, properties of stream-based languages, and the analysis results enable the application of specialized rewriting rules.
These rewrite rules follow standard compiler techniques, such as combining if statements or loops.
However, since in a correctly constructed StreamIR program, every expression consistently evaluates to the same value regardless of its position, finding these patterns in the \streamir monitor is simpler than in a general-purpose language.
These compilers require complex program analysis which might miss the required assumptions to apply the optimizations.

\begin{example}
Consider the following snippets. The code on the left is produced by the translation from \rtlola and the code on the right is an optimized version:
\noindent
\begin{center}
\begin{minipage}{0.45\linewidth}
\begin{lstlisting}[style=ir, basicstyle=\rmfamily\itshape\scriptsize]
	iterate p	
		if schedule global 0.5
		  $\land$ dynamic self = syn a then
			eval p ...
	iterate q
		if schedule global 0.5
		  $\land$ dynamic self = syn a then
			eval q ...
\end{lstlisting}
\end{minipage}
\hspace{5mm}
\begin{minipage}{0.33\linewidth}
\begin{lstlisting}[style=ir, basicstyle=\rmfamily\itshape\scriptsize]
	if schedule global 0.5 then
		assign ( syn a]
			eval p ... ;
			eval q ...
\end{lstlisting}
\end{minipage}
\end{center}
First, we assume that the output streams $p$ and $q$ share the same spawn and close clauses, i.e., each clause has the same pacing, \lstinline|when|-condition, and \lstinline|with|-expression.
This property ensures that the iteration blocks iterate over the same parameters, so the first rewrite rule combines these iterations.
A second rule moves the first part of the guards outside the iteration since it is independent of the specific instance.
This optimization is possible because the global deadline is parameter-independent, whereas a local deadline would not permit such optimization.
Last, the remaining guard uniquely determines one instance, so the next rule replaces the iteration and the if statement with an assign statement directly calculating the parameter.
\ifthenelse{\boolean{fullversion}}{%
For the list of all rewriting rules, consider \Cref{app:rewriting}.
}{}
\end{example}

\paragraph{Memory.}
Stream-based specification languages operate over an infinite sequence of values.
However, an analysis can often determine static memory bounds per stream instance.
In \rtlola, all temporal accesses are bounded, allowing the dependency analysis to reveal a finite buffer length for all instances.
Based on a stream's structure, we can optimize the memory representation using a set of memory rewriting rules analogous to the rewriting rules from the previous paragraph.
They include optimizations for non-parameterized streams, instances with a buffer length of 1, and streams that are only accessed synchronously.

\paragraph{Symbolic Execution.}
\label{par:sym_execution}
The StreamIR enables a partial evaluation of guard conditions.
Code blocks surrounded by guards that are never satisfied can be removed, while guard checks can be eliminated if their conditions are never violated.
For instance, in \rtlola, the evaluation cycle is either completely event- or time-based.
By partially evaluating the \streamir, it is possible to create two versions: one containing event-based streams and one containing time-driven streams.
\section{Implementation \& Evaluation}

\subsubsection{Implementation.}
\label{sec:implementation}

\begin{figure}[t]
	\centering
	\scalebox{0.77}{
\begin{tikzpicture}[formatter/.style={draw,align=center,anchor=west,minimum width=2cm},parser/.style={draw,minimum width=2cm},every path/.style={shorten >=1pt}]

	\node[parser,anchor=east,align=center,inner sep=2mm] (frontend) {RTLola\\Frontend};
	\node[align=center,inner sep=2mm,right=5mm of frontend] (MIR) {RTLola\\MIR};
	\node[parser,right=5mm of MIR,align=center,inner sep=2mm] (compiler) {Translation};
	\node[align=center,inner sep=2mm,right=5mm of compiler] (IR) {StreamIR};

	\node[formatter,minimum height=0.8cm,right=6mm of IR] (rust) {Rust\\[-0.5mm]Formatter};
	\node[formatter,above=1.5mm of rust,minimum height=0.8cm] (interpreter) {Interpreter};
	\node[formatter,below=1.5mm of rust,minimum height=0.8cm] (solidity) {Solidity\\[-0.5mm]Formatter};

	\node[left=5mm of interpreter,yshift=1mm] (trace) {Input Trace};
	\draw[->] (trace) -- (interpreter.west |- trace);

	\draw ([xshift=-3mm,yshift=2mm]compiler.west |- interpreter.north) rectangle ([xshift=2mm,yshift=-2mm]solidity.south east);
	\node[align=left,anchor=south west] at ([yshift=-1.5mm,xshift=-2mm]compiler.west |- solidity.south) {StreamIR\\Framework};

	\node[parser,anchor=south] (rewriting) at (IR |- solidity.south) {Rewriting};
	\draw[->] (IR) to[bend left] (rewriting);
	\draw[->] (rewriting) to[bend left] (IR);

	\draw[->] (IR.east) -- (rust.west);
	\draw[->] ([yshift=-3mm]IR.east) -- (solidity.west);
	\draw[->] ([yshift=3mm]IR.east) -- ([yshift=-1mm]interpreter.west);

	\draw[->] (frontend.east) --  (MIR);

	\draw[->] (rust) -- ++(1.5cm,0) node[anchor=west,align=left] {Rust\\Monitor};
	\draw[->] (solidity) -- ++(1.5cm,0) node[anchor=west,align=left] {Solidity\\Monitor};
	\draw[->] (interpreter) -- ++(1.5cm,0) node[anchor=west,align=left] {Monitor\\Output};
	\draw[->] (MIR) -- (compiler);
	\draw[->] (compiler) -- (IR);

	\draw[<-,shorten <=1pt] (frontend) -- ++(-1.5cm,0) node[anchor=east,align=center] {RTLola\\Specification};
\end{tikzpicture}}
\caption{Overview of the StreamIR framework}
\label{fig:compilerOverview}
\end{figure}
We extended the \rtlola framework~\cite{baumeister2024tutorial,faymonville2019streamlab} with a StreamIR-based interpreter and a compiler to Rust and Solidity.
We illustrate our setup in \Cref{fig:compilerOverview}.
Our approach uses the existing \rtlola frontend for parsing and analyzing specifications.
The RTLolaMIR includes all inferred information, such as pacing types or memory bounds.
We then apply the translation steps outlined in \Cref{sec:translation} as well as the rewriting rules introduced in \Cref{sec:rewrite}.
The library outputs the final StreamIR, which is then either interpreted or compiled.

The new \emph{interpretation} uses just-in-time (JIT) compilation for the complete \streamir while the existing \rtlola interpreter~\cite{baumeister2024tutorial} only uses JIT compilation for stream expressions.
This approach leverages the discussed optimizations, i.e., partial evaluation of the \streamir to decompose computation in time- and event-driven sections and rewrite rules to optimize control flow and memory.

For the \emph{compiler}, the StreamIR library provides a framework to translate the monitor into different target languages.
For each language, a formatter defines the translation of statements and expressions into the corresponding constructs of the target language.
Currently, we support the compilation to two programming languages used in different domains.
Our Rust formatter generates highly efficient Rust code that can run on microcontrollers.
Our second formatter compiles specifications to Solidity, a programming language for smart contracts.

Our artifact is available at Zenodo\footnote{\url{https://doi.org/10.5281/zenodo.15222546}} and our implementation is open-source and available on crates.io\footnote{\url{https://crates.io/crates/rtlola-streamir}}\footnote{\url{https://crates.io/crates/rtlola-streamir-interpreter}}\footnote{\url{https://crates.io/crates/rtlola2rust}}\footnote{\url{https://crates.io/crates/rtlola2solidity}}
and GitHub\footnote{\url{https://github.com/reactive-systems/rtlola-streamir}}.

\subsubsection{Evaluation.}
\label{sec:evaluation}
The evaluation was performed on a system with a 13th Gen Intel Core i7-1355U processor.
Rust code is compiled using rustc version 1.84.0 with the highest optimization level 3, and Solidity using solc version 0.8.24 optimized for 1000 runs.
We evaluate our implementation on three different benchmark sets and exclude the setup phases, such as specification analysis, for the runtime measurements.
For the optimizations, we apply the same set of rewriting rules for all specifiations.

\paragraph{Unmanned Aircraft.}
First, we evaluate our approach using two specifications from the field of monitoring unmanned aircraft.
The first specification is a geofencing application~\cite{DBLP:conf/cav/BaumeisterFKLMST24}, where a drone's position is continuously monitored to ensure it remains within a designated geofence.
This geofence is defined by a set of polygon lines in which the number of lines is proportional to the number of output streams.
The second specification is based on the \rtlola tutorial~\cite{baumeister2024tutorial}, observing the surrounding airspace of a drone to detect potential interference from other drones -- called intruders.
Unlike the first specification, which we evaluate across varying numbers of polygon lines, this benchmark varies the number of intruders while keeping the specification unchanged.
Here, the number of intruders corresponds to the number of stream instances.
For both specifications, we measure the execution time over 20 runs, each processing 10,000 input events.
We report the runtime of the old interpreter~\cite{faymonville2019streamlab}, the runtime of the new StreamIR interpreter with and without the optimizations, and the runtime of the optimized and unoptimized monitors compiled using the Rust formatter.

\begin{figure}[t]
	\begin{subfigure}[t]{0.485\linewidth}
		\includegraphics[width=\linewidth]{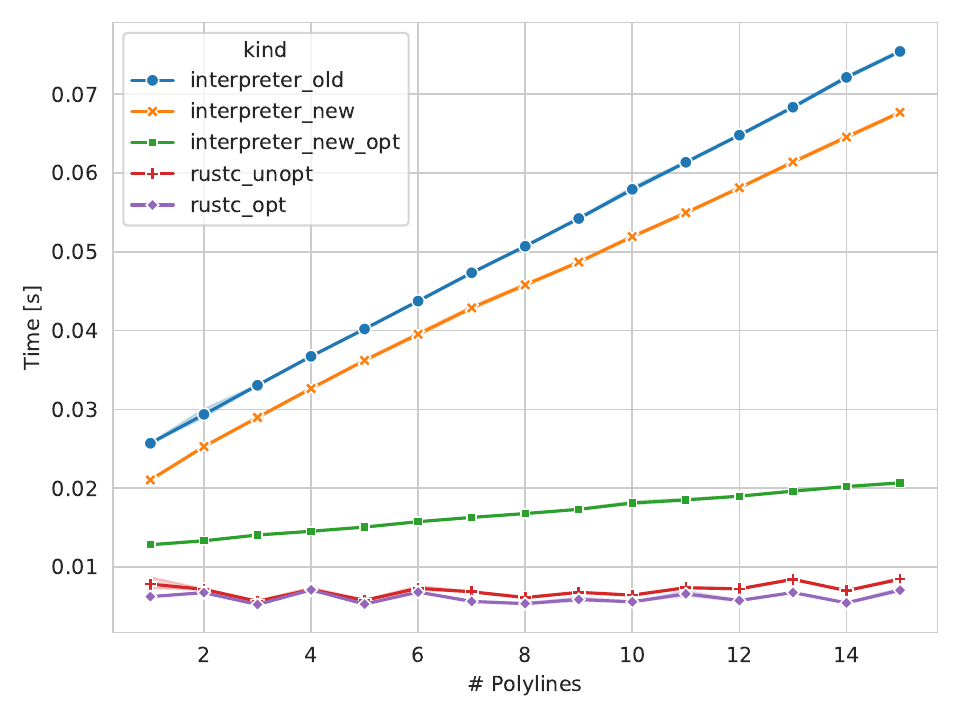}
		\caption{
			Geofence specification with increasing number of polygon lines.
		}
		\label{fig:eval:geofence}
	\end{subfigure}
	\hfill
	\begin{subfigure}[t]{0.485\linewidth}
		\includegraphics[width=\linewidth]{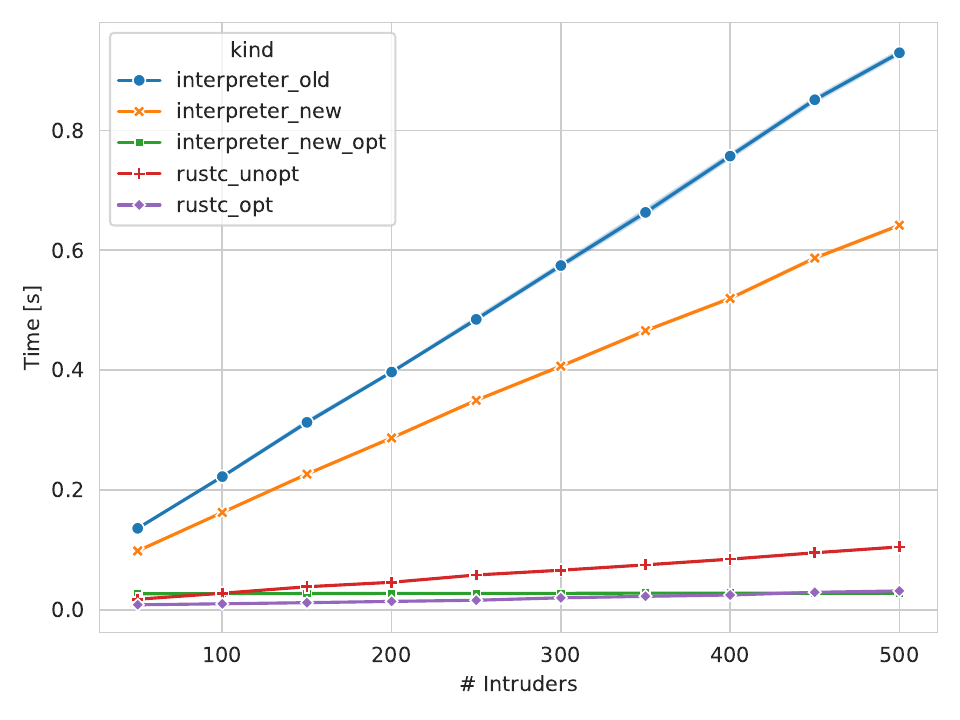}
		\caption{
			Intruder specification with increasing number of intruders.
		}
		\label{fig:eval:intruder}
	\end{subfigure}
	\caption{
		Runtime comparison between interpreters and compiled Rust monitors.
	}
	\label{fig:eval:interpreter}
\end{figure}

As seen in \Cref{fig:eval:interpreter}, the runtime of all monitors increases linearly because the number of stream evaluations increases either with the number of output streams or stream instances.
In general, a compilation is much faster than an interpretation.
When comparing the interpreter implementation, our new implementation outperforms the old one even without the optimizations by using JIT compilation for the complete StreamIR. 
After applying the StreamIR rewrite rules, the runtime is significantly reduced.
In contrast, the rewrite rules mostly do not affect the runtime for the geofence compilation to Rust, since the Rust compiler is capable of applying most optimizations on its own.
For the intruder specification however, StreamIR optimizations are able to significantly optimize the runtime of the compiled monitor.
The reason for this are optimizations of parameterized streams, which are easy in the StreamIR representation but hard for a Rust program in general.
\ifthenelse{\boolean{fullversion}}{%
The individual runtimes are presented in \Cref{app:runtimes} for the interested reader.
}{}

\setlength\intextsep{0pt}
\paragraph{Algorithmic Fairness.}
\begin{wraptable}[8]{r}{0pt}
	\setlength{\tabcolsep}{3pt}%
	\scalebox{0.85}{
	\begin{tabular}{lrr}
	\toprule
	& Unopt & Opt \\\midrule 
	Existing Interpreter & 17.258 & - \\
	StreamIR Interpreter & 12.889 & 7.943 \\
	Compilation & 3.157 & 1.443\\\bottomrule
	\end{tabular}
	}
	\caption{
		Runtime comparison of the fairness specification.
	}
	\label{fig:fair}
\end{wraptable}
The second benchmark utilizes the equalized-odds specification~\cite{baumeister2025fairness} for monitoring algorithmic fairness of the COMPAS tool.
COMPAS was developed by Northpointe~\cite{COMPAS} and predicts the recidivism risk of defendants.
A retrospective analysis~\cite{ProPublica16} showed the bias of the tool and our recent paper~\cite{baumeister2025fairness} demonstrates the use of stream-based monitoring to detect such bias.
\Cref{fig:fair} depicts the runtime on the COMPAS dataset, using the same evaluation setup described in the previous paragraph.
The table reports similar results to the previous benchmark.
Optimizations on the StreamIR reduce the runtime of the interpreter and compiler.

\paragraph{Smart Contracts.}
Efforts have explored monitoring smart contracts~\cite{azzopardi2018monitoring,abraham2019runtime} or expressing them in formal specification languages~\cite{finkbeiner2023reactive,mavridou2018designing,zupan2020secure}.
We extend this effort with the compilation to Solidity, so the monitor or smart contract itself can be expressed in \rtlola.
Since the interpretations can not be executed on the blockchain, our evaluation focuses on the performance benefits of the optimizations when compiling to Solidity.

We extended our framework with a function interface to compile functions describing the contract.
The functions set specific input stream values, call the monitor and return new values of output streams.
As not all input streams are set by a function, the compiler adapts the StreamIR using partial evaluation.
The evaluation uses \rtlola specifications that have previously been employed to monitor smart contracts~\cite{azzopardi2018monitoring,Scheerer/21,ERC20Reference} as well as existing Solidity contracts that we expressed in \rtlola~\cite{VotingReference,SimpleAuctionReference,CrowdFundingReference}.
\ifthenelse{\boolean{fullversion}}{
A description of the contracts and their functions can be found in \Cref{app:contracts}.
}{}
For the comparison, we assess the required gas, the transaction fee for calling functions on the blockchain, for contract deployment and the average gas consumption per function call over 1000 calls.

\begin{figure}[t]
	\begin{subfigure}[t]{0.49\linewidth}
	\includegraphics[width=\linewidth]{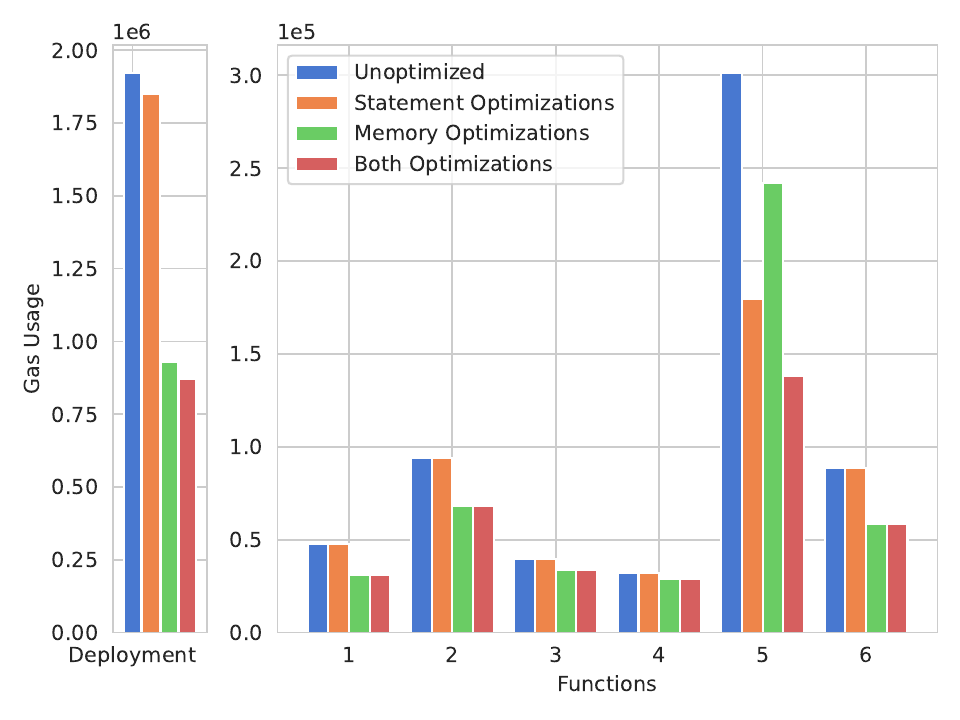}
	\caption{
		Order contract
	}
	\label{fig:eval:orders}
	\end{subfigure}\hfill
	\begin{subfigure}[t]{0.49\linewidth}
	\includegraphics[width=\linewidth]{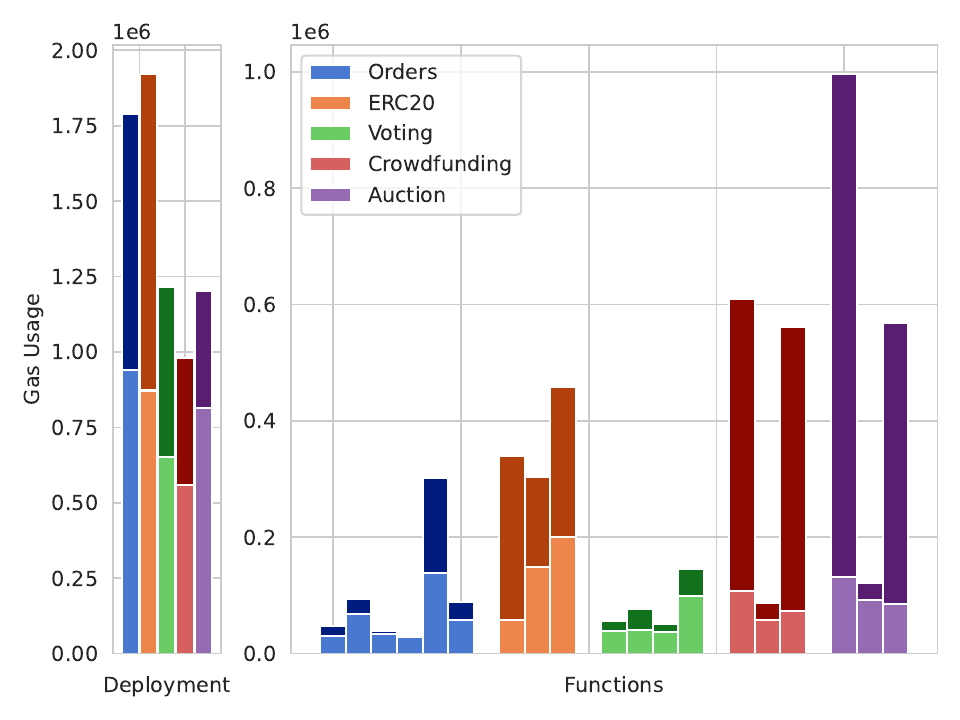}
	\caption{
		All contracts
	}
	\label{fig:eval:all}
	\end{subfigure}
\caption{
	Gas usage comparison of the Solidity monitors.
}
\label{fig:eval:solidity}
\end{figure}

\Cref{fig:eval:orders} displays the average gas consumption per function call when monitoring an ordering system, along with the gas costs for the contracts deployment.
The functions allow the interaction with the contract, for example to place a new order, mark an order as delivered or request termination of the contract.
The cost of the six functions and the deployment are each represented with four bars, where each bar represents the contract \begin{enumerate*}
	\item without any optimizations
	\item with statement optimizations
	\item with memory optimizations, and
	\item with both optimizations.
\end{enumerate*}
The results demonstrate that the deployment costs mostly are reduced by memory optimizations minimizing the use of global variables.
Furthermore, the gas reduction of some function calls relies on memory optimizations.
By reducing reliance on global storage, these optimizations yield substantial savings in deployment and runtime.
Meanwhile, StreamIR optimizations mainly improve function execution costs, especially for functions that update parameterized streams, such as the fifth function for placing a new order.
These functions benefit significantly from the iterate-assign optimization, which eliminates unnecessary iterations and removes the need to store activate parameters.

\Cref{fig:eval:all} gives an overview of the gas usage across a number of different contracts.
In this figure, each bar corresponds to a function in a contract.
The top half of the bar illustrates the gas costs for the StreamIR-unoptimized contract, while the lower half shows the reduction obtained by enabling all StreamIR optimizations.
For example, the dark blue bars in \Cref{fig:eval:all} correspond with the blue bars in \Cref{fig:eval:orders} and the light blue corresponds with the red bars.
\Cref{fig:eval:all} demonstrate that our optimizations reduce gas usage across all specifications.

Beyond gas usage reduction, the StreamIR optimizations address a fundamental issue of contracts with parameterized output streams.
Iteration introduces unbounded gas usage -- a highly undesirable property for smart contracts.
By applying the optimization that replaces iterations with assignments -- feasible for all evaluated specifications -- gas usage becomes bounded again.
\section{Conclusion}
\label{sec:conclusion}
We have presented a new framework for the stream-based specification language \rtlola, with an interpretation and a compilation to Rust and Solidity.
Our framework uses a new intermediate representation designed for stream-based specifications.
Further, we introduced rewrite rules for this representation to optimize the resulting monitors.
Even without optimizations, our experiments show that the new interpreter and compiler outperform the existing \rtlola interpreter.
Additionally, we show the impact of the rewriting rules optimizing the runtime of the monitor.
Especially in specifications with parameterized streams, our optimizations speed up the compiled monitor significantly.
For the interpretation, our optimizations also have a huge impact for specifications without parameterized streams.

\subsubsection{Acknowledgments.} This work was partially supported by the Aviation Research Program LuFo of the German Federal Ministry for Economic Affairs and Energy as part of "Volocopter Sicherheitstechnologie zur robusten eVTOL Flugzustandsabsicherung durch formales Monitoring" (No.~20Q1963C), by the German Research Foundation (DFG) as part of TRR 248 (No.~389792660), and by the European Research Council (ERC) Grant HYPER (No.~101055412).

\subsubsection{Disclosure of Interests.}
The authors have no competing interests to declare that are relevant to the content of this article.

\bibliographystyle{splncs04}
\bibliography{references.bib}

\begin{thebibliography}{10}
\providecommand{\url}[1]{\texttt{#1}}
\providecommand{\urlprefix}{URL }
\providecommand{\doi}[1]{https://doi.org/#1}

\bibitem{abraham2019runtime}
Abraham, M., Jevitha, K.: Runtime verification and vulnerability testing of
  smart contracts. In: Advances in Computing and Data Sciences: Third
  International Conference, ICACDS 2019, Ghaziabad, India, April 12--13, 2019,
  Revised Selected Papers, Part II 3. pp. 333--342. Springer (2019)

\bibitem{ProPublica16}
Angwin, J., Larson, J., Mattu, S., Kirchner, L.: Machine bias. there’s
  software used across the country to predict future criminals. and it’s
  biased against blacks. ProPublica  (2016),
  \url{https://www.propublica.org/article/machine-bias-risk-assessments-in-criminal-sentencing}

\bibitem{azzopardi2018monitoring}
Azzopardi, S., Ellul, J., Pace, G.J.: Monitoring smart contracts: Contractlarva
  and open challenges beyond. In: Runtime Verification: 18th International
  Conference, RV 2018, Limassol, Cyprus, November 10--13, 2018, Proceedings 18.
  pp. 113--137. Springer (2018)

\bibitem{DBLP:conf/cav/BaumeisterFKLMST24}
Baumeister, J., Finkbeiner, B., Kohn, F., L{\"{o}}hr, F., Manfredi, G.,
  Schirmer, S., Torens, C.: Monitoring unmanned aircraft: Specification,
  integration, and lessons-learned. In: Gurfinkel, A., Ganesh, V. (eds.)
  Computer Aided Verification - 36th International Conference, {CAV} 2024,
  Montreal, QC, Canada, July 24-27, 2024, Proceedings, Part {II}. Lecture Notes
  in Computer Science, vol. 14682, pp. 207--218. Springer (2024).
  \doi{10.1007/978-3-031-65630-9\_10}

\bibitem{baumeister2024tutorial}
Baumeister, J., Finkbeiner, B., Kohn, F., Scheerer, F.: A tutorial on
  stream-based monitoring. In: International Symposium on Formal Methods. pp.
  624--648. Springer (2024)

\bibitem{baumeister2020automatic}
Baumeister, J., Finkbeiner, B., Kruse, M., Schwenger, M.: Automatic
  optimizations for stream-based monitoring languages. In: International
  Conference on Runtime Verification. pp. 451--461. Springer (2020)

\bibitem{baumeister2025fairness}
Baumeister, J., Finkbeiner, B., Scheerer, F., Siber, J., Wagenpfeil, T.:
  Stream-based monitoring of algorithmic fairness. In: Gurfinkel, A., Heule, M.
  (eds.) Tools and Algorithms for the Construction and Analysis of Systems. pp.
  60--81. Springer Nature Switzerland, Cham (2025)

\bibitem{DBLP:conf/cav/BaumeisterFSST20}
Baumeister, J., Finkbeiner, B., Schirmer, S., Schwenger, M., Torens, C.: Rtlola
  cleared for take-off: Monitoring autonomous aircraft. In: Lahiri, S.K., Wang,
  C. (eds.) Computer Aided Verification - 32nd International Conference, {CAV}
  2020, Los Angeles, CA, USA, July 21-24, 2020, Proceedings, Part {II}. Lecture
  Notes in Computer Science, vol. 12225, pp. 28--39. Springer (2020).
  \doi{10.1007/978-3-030-53291-8\_3}

\bibitem{baumeister2019fpga}
Baumeister, J., Finkbeiner, B., Schwenger, M., Torfah, H.: Fpga
  stream-monitoring of real-time properties. ACM Transactions on Embedded
  Computing Systems (TECS)  \textbf{18}(5s),  1--24 (2019)

\bibitem{DBLP:conf/tacas/BiewerFHKSS21}
Biewer, S., Finkbeiner, B., Hermanns, H., K{\"{o}}hl, M.A., Schnitzer, Y.,
  Schwenger, M.: Rtlola on board: Testing real driving emissions on your phone.
  In: Groote, J.F., Larsen, K.G. (eds.) Tools and Algorithms for the
  Construction and Analysis of Systems - 27th International Conference, {TACAS}
  2021, Held as Part of the European Joint Conferences on Theory and Practice
  of Software, {ETAPS} 2021, Luxembourg City, Luxembourg, March 27 - April 1,
  2021, Proceedings, Part {II}. Lecture Notes in Computer Science, vol. 12652,
  pp. 365--372. Springer (2021). \doi{10.1007/978-3-030-72013-1\_20}

\bibitem{boussinot1991esterel}
Boussinot, F., De~Simone, R.: The esterel language. Proceedings of the IEEE
  \textbf{79}(9),  1293--1304 (1991)

\bibitem{convent2018tessla}
Convent, L., Hungerecker, S., Leucker, M., Scheffel, T., Schmitz, M., Thoma,
  D.: Tessla: temporal stream-based specification language. In: Formal Methods:
  Foundations and Applications: 21st Brazilian Symposium, SBMF 2018, Salvador,
  Brazil, November 26--30, 2018, Proceedings 21. pp. 144--162. Springer (2018)

\bibitem{d2005lola}
d'Angelo, B., Sankaranarayanan, S., S{\'a}nchez, C., Robinson, W., Finkbeiner,
  B., Sipma, H.B., Mehrotra, S., Manna, Z.: Lola: runtime monitoring of
  synchronous systems. In: 12th International Symposium on Temporal
  Representation and Reasoning (TIME'05). pp. 166--174. IEEE (2005)

\bibitem{faymonville2016stream}
Faymonville, P., Finkbeiner, B., Schirmer, S., Torfah, H.: A stream-based
  specification language for network monitoring. In: International Conference
  on Runtime Verification. pp. 152--168. Springer (2016)

\bibitem{faymonville2019streamlab}
Faymonville, P., Finkbeiner, B., Schledjewski, M., Schwenger, M., Stenger, M.,
  Tentrup, L., Torfah, H.: Streamlab: stream-based monitoring of cyber-physical
  systems. In: International Conference on Computer Aided Verification. pp.
  421--431. Springer (2019)

\bibitem{finkbeiner2023reactive}
Finkbeiner, B., Hofmann, J., Kohn, F., Passing, N.: Reactive synthesis of smart
  contract control flows. In: International Symposium on Automated Technology
  for Verification and Analysis. pp. 248--269. Springer (2023)

\bibitem{finkbeiner2020verified}
Finkbeiner, B., Oswald, S., Passing, N., Schwenger, M.: Verified rust monitors
  for lola specifications. In: International Conference on Runtime
  Verification. pp. 431--450. Springer (2020)

\bibitem{gautier1987signal}
Gautier, T., Le~Guernic, P., Besnard, L.: Signal: A declarative language for
  synchronous programming of real-time systems. Springer (1987)

\bibitem{gorostiaga2018striver}
Gorostiaga, F., S{\'a}nchez, C.: Striver: Stream runtime verification for
  real-time event-streams. In: International Conference on Runtime
  Verification. pp. 282--298. Springer (2018)

\bibitem{halbwachs1991synchronous}
Halbwachs, N., Caspi, P., Raymond, P., Pilaud, D.: The synchronous data flow
  programming language lustre. Proceedings of the IEEE  \textbf{79}(9),
  1305--1320 (1991)

\bibitem{10.1007/978-3-031-37703-7_17}
Henzinger, T.A., Karimi, M., Kueffner, K., Mallik, K.: Monitoring algorithmic
  fairness. In: Enea, C., Lal, A. (eds.) Computer Aided Verification. pp.
  358--382. Springer Nature Switzerland, Cham (2023)

\bibitem{junges2021runtime}
Junges, S., Torfah, H., Seshia, S.A.: Runtime monitors for markov decision
  processes. In: Silva, A., Leino, K.R.M. (eds.) Computer Aided Verification -
  33rd International Conference, {CAV} 2021, Virtual Event, July 20-23, 2021,
  Proceedings, Part {II}. Lecture Notes in Computer Science, vol. 12760, pp.
  553--576. Springer (2021). \doi{10.1007/978-3-030-81688-9\_26}

\bibitem{kallwies2022optimizing}
Kallwies, H., Leucker, M., Prilop, M., Schmitz, M.: Optimizing trans-compilers
  in runtime verification makes sense--sometimes. In: International Symposium
  on Theoretical Aspects of Software Engineering. pp. 197--204. Springer (2022)

\bibitem{kallwies2022tessla}
Kallwies, H., Leucker, M., Schmitz, M., Schulz, A., Thoma, D., Weiss, A.:
  Tessla--an ecosystem for runtime verification. In: International Conference
  on Runtime Verification. pp. 314--324. Springer (2022)

\bibitem{mavridou2018designing}
Mavridou, A., Laszka, A.: Designing secure ethereum smart contracts: A finite
  state machine based approach. In: Financial Cryptography and Data Security:
  22nd International Conference, FC 2018, Nieuwpoort, Cura{\c{c}}ao, February
  26--March 2, 2018, Revised Selected Papers 22. pp. 523--540. Springer (2018)

\bibitem{COMPAS}
{Northpoint Inc. d/b/a equivant}: Practitioner’s guide to compas core.
  \url{https://archive.epic.org/algorithmic-transparency/crim-justice/EPIC-16-06-23-WI-FOIA-201600805-COMPASPractionerGuide.pdf}
  (2015), accessed: 2025-01-30

\bibitem{perez2020copilot}
Perez, I., Dedden, F., Goodloe, A.: Copilot 3. Tech. rep. (2020),
  \url{https://ntrs.nasa.gov/citations/20200003164}

\bibitem{DBLP:conf/fm/PerezGD24}
Perez, I., Goodloe, A.E., Dedden, F.: Runtime verification in real-time with
  the copilot language: {A} tutorial. In: Platzer, A., Rozier, K.Y., Pradella,
  M., Rossi, M. (eds.) Formal Methods - 26th International Symposium, {FM}
  2024, Milan, Italy, September 9-13, 2024, Proceedings, Part {II}. Lecture
  Notes in Computer Science, vol. 14934, pp. 469--491. Springer (2024).
  \doi{10.1007/978-3-031-71177-0\_27}

\bibitem{pike2010copilot}
Pike, L., Goodloe, A., Morisset, R., Niller, S.: Copilot: A hard real-time
  runtime monitor. In: International Conference on Runtime Verification. pp.
  345--359. Springer (2010)

\bibitem{CrowdFundingReference}
{Program the Blockchain}: Coin funding reference contract.
  \url{https://programtheblockchain.com/posts/2018/01/19/writing-a-crowdfunding-contract-a-la-kickstarter/}
  (2022), accessed: 2025-01-30

\bibitem{Scheerer/21}
Scheerer, F.: Monitoring smart contracts with rtlola. Bachelor's Thesis,
  Saarland University (2021)

\bibitem{VotingReference}
{Sudi, Andre}: Voting reference contract.
  \url{https://github.com/andresudi/Voting-Smart-Contract/blob/master/Voting.sol}
  (2021), accessed: 2025-01-30

\bibitem{SimpleAuctionReference}
{The Solidity Authors}: Simple auction reference contract.
  \url{https://docs.soliditylang.org/en/latest/solidity-by-example.html\#simple-open-auction},
  accessed: 2025-01-30

\bibitem{ERC20Reference}
{Vogelsteller, Fabian and Buterin, Vitalik}: Erc-20: Token standard.
  \url{https://eips.ethereum.org/EIPS/eip-20} (2015), accessed: 2025-01-30

\bibitem{zupan2020secure}
Zupan, N., Kasinathan, P., Cuellar, J., Sauer, M.: Secure smart contract
  generation based on petri nets. In: Blockchain Technology for Industry 4.0:
  Secure, Decentralized, Distributed and Trusted Industry Environment, pp.
  73--98. Springer (2020)

\end{thebibliography}

\ifthenelse{\boolean{fullversion}}{
\clearpage
\appendix

\section{Syntax}
\label{app:syntax}
\begin{figure}[h]
	\scriptsize
    \begin{bnfgrammar}
      stmt : \text{Statements \qquad} ::= \syn{skip} \textbar~\syn{shift} sr \textbar~\syn{input} sr 
	  | \syn{spawn} sr expr \textbar~\syn{eval} sr expr \textbar~\syn{close} sr
      | stmt \syn{;} stmt \textbar~stmt \syn{$\|$} stmt
      | \syn{if} guard \syn{then} stmt \syn{else} stmt
	  | \syn{iterate} sr stmt \textbar~\syn{assign} expr stmt
      ;;
      guard : \text{Conditions} ::= \syn{input?} sr \textbar~\syn{schedule} freq \textbar~\syn{dynamic} expr
	  | guard ( \syn{$\lor$} \(\vert\) \syn{$\land$} ) guard
      ;;
      expr : \text{Expressions \qquad} ::= \syn{const} c \textbar~\syn{get} sr expr off expr \textbar~\syn{syn} sr expr
      | expr ( \syn{+} \(\vert\) \syn{-} \(\vert\) \syn{*} \(\vert\) \dots ) expr \textbar~\syn{self}
	  | \syn{window} sr expr dur f \textbar~...
      ;;
	freq: \text{Frequencies} ::= \syn{global} dur \textbar~\syn{local} dur sr;;
	dur : \text{Durations} ::= $\RR^{+}$;;
	f : \text{Aggregations} ::= $\VV^* \rightarrow \VV$;;
	c : \text{Constants} ::= $\VV$
    \end{bnfgrammar}
    \caption{Syntax of StreamIR}%
    \label{fig:imp_syntax}
\end{figure}
\begin{definition}[StreamIR Monitors]
	A monitor $\phi = \syn{loop} \syn{\{} stmt \syn{\}}$ is an infinite loop over the statement $stmt$.
\end{definition}
\begin{remark}
	For readability, we use $\syn{if}~c~\syn{then}~stmt$ if the alternative is followed by the $\syn{skip}$ statement and removed the $\top$ placeholder for non-parameterized stream accesses in the paper.
\end{remark}
\section{Semantics}
\label{app:ir:sema}
\begin{definition}[Memory]
	A memory $M$ is defined as a pair of a \emph{$\prefixMapTy$} and a \emph{$\scheduleTy$}.
	\begin{align*}
		\prefixTy &:= (\TT, \VV)^*\\
		\prefixMapTy &:= \sr \rightarrow \VV \rightarrow (\prefixTy \cup \bot)\\
		\mathit{GlobalDeadline} &:= \mathbb{F} \rightarrow \RR_\bot\\
		\mathit{LocalDeadline} &:= (\mathbb{F} \times \sr \times \VV) \rightarrow \RR_\bot\\
		\scheduleTy &:=  \mathit{GlobalDeadline} \cup \mathit{LocalDeadline}\\
		\memoryTy &:= \prefixMapTy \times \scheduleTy
	\end{align*}
\end{definition}
For readability, $M[\mathit{sr}][\mathit{inst}]$ defines the lookup to the prefix of a stream instance in the prefixmap $M$.
The list notation $\mathit{pre} = \mathit{pre'} \cdot v$ splits the non-empty prefix $\mathit{pre}$ into the last element $v$ and the rest of the prefix $\mathit{pre'}$, $|\mathit{pre}|$ returns the length of the prefix, and $\mathit{pre}[n]$ returns the n-th element in the prefix.
For the aggregation functions, we use $\mathit{slice}$ to get the slice of a prefix:
\begin{definition}[Slice]
	Given a prefix $P$ and a timestamt $t$, the \emph{slice} of $P$ returns all values computed later than $t$:
	\begin{align*}
		\sliceFn&: \prefixTy \cup \bot \times \RR \rightarrow \VV^*\\
		\sliceFn(v, t) &= v \qquad\text{if } v \in \{ \bot, \epsilon\}\\
		\sliceFn(xs \cdot (v, t'), t) &= \begin{cases}
			\sliceFn(xs, t) \cdot v & \text{if } t \le t'\\
			\epsilon & \text{otherwise}
		\end{cases}
	\end{align*}
\end{definition}
\begin{figure}[h]
	\scalebox{0.8}{
	\begin{mathpar}
		\inferrule[eval-self]{ \p \ne \bot }{(M, \syn{self}) \evalExp \p}\qquad
		\inferrule[eval-syn]{ (M, \instExp) \evalExp \p' \\M[sr][\p'] = \pre \cdot (t, val) }{(M, \syn{syn}~\mathit{sr}~\instExp) \evalExp val}\\
		\inferrule[eval-get]{ (M, \instExp) \evalExp \p' \\ M[sr][\p'] = \pre\\\\|\pre| > \mathit{off} \\\pre[|\pre| - \mathit{off}] = (t',val) }{ (M, \syn{get}~\mathit{sr}~\instExp~\mathit{off}~\mathit{dft}) \evalExp val }\quad
		\inferrule[eval-get-dft]{ (M, \instExp) \evalExp \p' \\ M[sr][\p'] = \pre\\\\|\pre| \le \mathit{off} \\(M, dft) \evalExp val }{ (M, \syn{get}~\mathit{sr}~\instExp~\mathit{off}~\mathit{dft}) \evalExp val }\\
		\inferrule[eval-const]{c \in \mathbb{V} }{(M, \syn{const}~c) \evalExp c}\qquad
		\inferrule[eval-bin-op]{ (M, exp_1) \evalExp val_1\\(M, exp_2) \evalExp val_2 \\ \circ \in \{+, -, *, \dots\}}{ (M, exp_1 \ \syn{$\circ$} \ exp_2) \evalExp val_1 \circ val_2 }\\
		\inferrule[eval-window]{ (M, \instExp) \evalExp \p' \\ M[sr][\p'] = \pre \\ f(\sliceFn(\pre, t - dur)) = val}{ (M, \syn{window}~\mathit{sr}~\instExp~\mathit{dur}~\mathit{f}) \evalExp val}
	  \end{mathpar}}
	\caption{Operational semantics of stream expressions}
	\label{fig:inf-rules:expr}
\end{figure}
\begin{definition}[Evaluation of expressions]
	Given a prefixmap $M$, a stream expression $\mathit{expr}$, a current instance $\mathit{inst}$ and a timepoint $t$, the expression evaluation $(M, \mathit{expr}) \Downarrow^{\mathit{inst}}_{t} \mathit{val}$ is described by the inference rules in \Cref{fig:inf-rules:expr}.
\end{definition}
\begin{figure}[h]
	\scalebox{0.8}{
	\begin{mathpar}
		\inferrule[eval-input-guard]{I(sr) = v}{(M,\syn{input?}~sr) \evalGuard v \ne \bot}\qquad
		\inferrule[eval-bin-guard]{(M,exp_1) \evalGuard val_1\\(M, exp_2) \evalGuard val_2\\ \circ \in \{\land, \lor\}}{(M, exp_1 \ \syn{$\circ$} \ exp_2) \evalGuard val_1 \circ val_2}\\
		\inferrule[eval-schedule-global]{ D(freq) = t'}{(M, \syn{schedule}~\syn{global}~freq) \evalGuard t'= t}\qquad
		\inferrule[eval-schedule-local]{ D(freq, sr, inst) = t'}{(M, \syn{schedule}~\syn{local}~freq~sr) \evalGuard t' = t}\\
		\inferrule[eval-dynamic]{ (M,exp) \evalExp val\\ val \in \mathbb{B}}{(M,\syn{dynamic}~exp) \evalGuard val}\qquad
	\end{mathpar}}
	\caption{Operational semantics of guards}
	\label{fig:inf-rules:guards}
\end{figure}
\begin{definition}[Evaluation of guards]
	Given a prefixmap $M$, a guard $g$, a current instance $\mathit{inst}$, an input $I$, a deadline $D$ and a timepoint $t$, the evaluation of a guard $(M, \mathit{g}) \Downarrow^{\mathit{inst}}_{I, D, t} \mathit{val}$ is described by the inference rules in \Cref{fig:inf-rules:guards}.
\end{definition}

  \begin{definition}[Update]
	Given a prefixmap $M$, a stream reference $\mathit{sr}$, and an instance $\mathit{inst}$, an \emph{update} $M[\mathit{sr}][\mathit{inst} \leftarrow v]$ is defined as
	\begin{align*}
		&\cdot[\cdot][\cdot \leftarrow \cdot] : (\prefixMapTy \times \sr \times \VV \times \VV) \rightarrow \prefixMapTy\\
		&M[\mathit{sr}][\mathit{inst} \leftarrow v] = \lambda \mathit{sr'}, \mathit{inst'} \begin{cases}
			v & \text{if } \mathit{sr} = \mathit{sr}' \land \mathit{inst} = \mathit{inst'}\\
			M[\mathit{sr'}][\mathit{inst'}] & \text{otherwise}
		\end{cases}
	\end{align*}
\end{definition}

\begin{figure}[h]
	\scalebox{0.8}{
	\begin{mathpar}
		\inferrule[exec-skip]{ }{(M,\syn{skip}) \evalStmt M}\qquad
		\inferrule[exec-shift]{ }{(M,\syn{shift}~\mathit{sr}) \evalStmt M[sr][inst \leftarrow M[sr][inst]\cdot \bot]}\\
		\inferrule[exec-seq]{ (M_1, p_1) \evalStmt M_2\\(M_2, p_2) \evalStmt M_3}{(M_1, p_1 \syn{;} \ p_2) \evalStmt M_3}\qquad
		\inferrule[exec-parallel]{ (M, p_1 \syn{;} \ p2) \evalStmt M'\\(M, p_2 \syn{;} \ p_1) \evalStmt M'}{(M, p_1 \mathbin{\syn{$\|$}} p_2) \evalStmt M'}\\
		\inferrule[exec-input]{I[sr] = val\\val \ne \bot\\M[sr][\top] = str \cdot \bot }{ (M, \syn{input} \ sr) \evalStmt M[sr][\top \gets str \cdot (t, val) ]}\\
		\inferrule[exec-eval]{(M, exp) \evalExp val \\M[sr][inst] = str \cdot \bot}{ (M, \syn{eval}~sr~exp) \evalStmt M[sr][inst \gets str \cdot (t,val) ]}\qquad
		\inferrule[exec-close]{ }{ (M,\syn{close}~sr) \evalStmt M[sr][inst \gets \bot ]}\\
		\inferrule[exec-spawn-new]{(M, \mathit{inst}_\mathit{exp}) \evalExp \mathit{inst'}\\ M[sr][\mathit{inst'}] = \bot }{ (M, \syn{spawn}~sr~\mathit{inst}_\mathit{exp}) \evalStmt M[sr][inst' \gets \epsilon ]}\quad
		\inferrule[exec-spawn-exists]{(M, \mathit{inst}_\mathit{exp}) \evalExp \mathit{inst'}\\M[sr][inst'] \ne \bot }{ (M, \syn{spawn}~sr~\mathit{inst}_\mathit{exp}) \evalStmt M}\\
		\inferrule[exec-if-true]{(M,stmt) \evalStmt M'\\(M, g) \evalExp true}{(M, \syn{if} \ g \ \syn{then} \ stmt \ \syn{else} \ stmt') \evalStmt M'}\qquad
    	\inferrule[exec-if-false]{(M,stmt') \evalStmt M'\\ (M, g) \evalExp false}{(M, \syn{if} \ g \ \syn{then} \ stmt \ \syn{else} \ stmt') \evalStmt M'}\\
		\inferrule[exec-iterate]{ \{p_1, \dots, p_n\} = \{p \mathbin{|} M_0[sr][p] \neq \bot \}\\\\ (M_0, stmt) \evalStmt[p_1] M_1\quad \dots \quad(M_{n-1}, stmt) \evalStmt[p_n] M_n}{(M_0, \syn{iterate}~sr~stmt) \evalStmt[\top] M_n }\quad
		\inferrule[exec-assign]{(M, \instExp) \evalExp[\top] inst\\(M, stmt) \evalStmt[inst] M'}{(M, \syn{assign}~\instExp~stmt) \evalStmt[\top] M'}
	\end{mathpar}}
	\caption{Operational semantics of statements}
	\label{fig:inf-rules:stmts}	
\end{figure}

\begin{definition}[Evaluation of statements]
	Given a prefixmap $M$, a statement $\mathit{stmt}$, a current instance $\mathit{inst}$, an input $I$, a deadline $D$ and a timepoint $t$, the evaluation of statement $(M, \mathit{stmt}) \Downarrow^{\mathit{inst}}_{I, D, t} M'$ returns a new prefixmap $M'$ following the inference rules in \Cref{fig:inf-rules:stmts}.
\end{definition}

\begin{definition}[Next Deadline]
	Given a deadline $D$, a timepoint $t$, and the prefixmaps $M$ and $M'$, the \emph{deadline update} is defined as
	\begin{align*}
		\nextDlFn&: \scheduleTy \times \RR \times \prefixMapTy \times \prefixMapTy  \rightarrow \scheduleTy\\
		\nextDlFn(D, t, M, M') &= \mathit{nextGlobalDl}(D, t) \cup \mathit{nextLocalDl}(D, t, M, M')\\
		\mathit{nextGlobalDl}(D,t) &= \lambda f. \begin{cases} 
			t + f &\text{if } D(f) = t\\
			D(f) &\text{otherwise}
		\end{cases} \qquad
		\mathit{nextLocalDl}(D,t,M,M') =\\ &\hspace{-1.6cm}\lambda f, sr, inst. \begin{cases}
		t + f &\text{if } D(f, sr, inst) = t\\
		t + f &\text{if } M[sr][inst] = \bot \land M'[sr][inst] \ne \bot\\
		\bot & \text{if } M[sr][inst] \ne \bot \land M'[sr][inst] = \bot\\
		D(f, sr, inst) &\text{otherwise}\\
		\end{cases}
	\end{align*}
\end{definition}

\begin{figure}[h]
	\scalebox{0.85}{
	\begin{mathpar}
		\inferrule[exec-input]{\mathit{min}(D) > t \\ (M, \mathit{stmt}) \evalStmt[\top] M'}{((M,D),\mathit{stmt}) \Rightarrow_{I, t} (M', D)}\qquad
		\inferrule[exec-deadline]{\mathit{min}(D) \le t \\ (M, \mathit{stmt}) \Downarrow^{\top}_{\emptyset,D,\mathit{min}(D)} M' \\\\ D' = \nextDlFn(D, min(D), M, M' )\\\\((M', D'), stmt) \Rightarrow_{I,t} (M'', D'')}{((M,D),\mathit{stmt}) \Rightarrow_{I, t} (M'', D'')}\\
	\end{mathpar}}
	\caption{Semantics of the evaluation of a step in the monitor}
	\label{fig:inf-rules:step}	
\end{figure}

\begin{definition}[Step semantics of statements]
	Given a memory $M$, an input $I$, and a timestamp $t$, a step of a statement $(M, \mathit{stmt}) \Rightarrow_{I, t} M'$ returns a memory $M'$ using the inference rule in \Cref{fig:inf-rules:step}.
\end{definition}

\begin{definition}[Semantics]
	Given an initial memory $M_0$ and an infinite input sequence $\inputs$, a monitor describes a sequence of \emph{memories}
	\begin{align*}
		\llbracket \syn{loop \{}stmt\syn{\}} \rrbracket_{\inputs, M_0} := &\{ M_0, M_1, \ldots \mid \forall n. M_n \Rightarrow_{\inputs[n]} M_{n+1}\}
	\end{align*}
  \end{definition}

  \begin{definition}[Well-definedness]
	A program $stmt$ with an initial memory $M$ is well-defined if, for any infinite sequence of inputs $\inputs$ with monotone time\-stamps, the program has exactly one evaluation model:
	\[
		\forall \inputs. \forall n. \mathit{Time}(\inputs[n]) < \mathit{Time}(\inputs[n + 1]) \rightarrow  \llbracket \syn{loop \{}stmt\syn{\}} \rrbracket_{\inputs, M} = \{ \world_\mathit{out}\}
	\]
\end{definition}

\section{Translation from \rtlola to \streamir}
\label{app:translation}
\renewcommand{\boxed}[1]{#1}
\begin{figure}[h]
	\centering
	\scalebox{0.9}{\parbox{\linewidth}{
	\begin{align*}
		\translate(\epsilon) &:= \syn{skip}\\
		\translate(layer \cdot xs) &:= \translate(layer)~\syn{;}~\translate(xs)\\
		\translate(\{task_0, ... task_n\}) &:= \translate(task_0)~\syn{$\|$}~\dots~\syn{$\|$}~\translate(task_n)\\
		\translate(Input(i)) &:= \boxed{\begin{aligned}
			&\syn{if}~\syn{inputs?}~i~\syn{then}~\syn{shift}~i~\syn{;}~\syn{input}~i\\
		\end{aligned}}\\
		\translate(Spawn(o)) &:= 
		\boxed{\begin{aligned}
			&\syn{if}~p_s~\syn{$\land$}~\syn{dynamic}~c_s~\syn{then}~\syn{spawn}~o~e_s
		\end{aligned}}\\
		\translate(Shift(o)) &:= 
		\boxed{\begin{aligned}
			&\syn{iterate}~o\\[-1.5mm]
			&\quad\syn{if}~p_e~\syn{$\land$}~\syn{dynamic}~c_e~\syn{then}~\syn{shift}~o\\
		\end{aligned}}\\
		\translate(Eval(o)) &:= 
		\boxed{\begin{aligned}
			&\syn{iterate}~o\\[-1.5mm]
			&\quad\syn{if}~p_e~\syn{$\land$}~\syn{dynamic}~c_e~\syn{then}~\syn{eval}~o~e\\
		\end{aligned}}\\
		\translate(Close(o)) &:= 
		\boxed{\begin{aligned}
			&\syn{iterate}~o\\[-1.5mm]
			&\quad\syn{if}~p_c~\syn{$\land$}~\syn{dynamic}~c_c~\syn{then}~\syn{close}~o\\
		\end{aligned}}
	\end{align*}}}
	\caption{Translation of an \rtlola specification to a monitor}
	\label{fig:tranlate}
\end{figure}

\begin{definition}[Translation]
	Given an \rtlola specification $\varphi$ with fully annotated output streams
	\begin{lstlisting}[frame=none,numbers=none]
		output $o$($inst$)
			spawn @$p_s$ when $c_s$ with $e_s$ eval @$p_e$ when $c_e$ with $e_e$ close @$p_c$ when $c_c$
	\end{lstlisting}
	and a partial order $\mathit{Layers}$ generated from the Dependency Graph of $\varphi$, the translation $\mathit{translate}_{\varphi}(\mathit{Layers})$ defined in \Cref{fig:tranlate} returns a statement $\mathit{stmt}$ where $\syn{loop \{}~\mathit{stmt}~\syn{\}}$ is a monitor $\psi$ for $\varphi$.
\end{definition}

\begin{definition}[Initial Memory]
	Given an \rtlola specification, the initial memory is defined as the pair $(M_0, D_0)$ with 
	\begin{align*}
		\forall sr \in \srin. M_0[sr][\top] &= \epsilon\\
		\forall sr \in \srout. \forall \mathit{inst} \in \VV. M_0[sr][\mathit{inst}] &= \bot\\
		\forall Global(\mathit{freq}) \in \varphi. D_0(freq) &= freq\\
		\forall Local(\mathit{freq}) \in \varphi. sr \in \srout. \mathit{inst} \in \VV. D_0(\mathit{freq}, sr, \mathit{inst}) &= \bot
	\end{align*}
\end{definition}

\begin{theorem}[Well-Definedness]
	For an \rtlola specification $\varphi$ and a partial order derived from the dependency graph of $\varphi$, the translation returns a well-defined monitor $\psi$ with initial memory $M_0$.
\end{theorem}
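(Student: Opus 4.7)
The plan is to establish well-definedness by separating existence from uniqueness, and to reduce both to structural properties of the translation guaranteed by the dependency analysis of $\varphi$. The overall strategy is an induction on the prefix of the input sequence $\inputs$, in which the inductive step unfolds one application of the step relation $\Rightarrow_{\inputs[n], t}$ and argues that a unique memory $M_{n+1}$ is produced from $M_n$.

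First I would prove, by a straightforward structural induction over the grammar in \Cref{fig:imp_syntax}, that the expression relation $\evalExp$ and the guard relation $\evalGuard$ are partial functions for any fixed $M, \p, I, D, t$. Inspection of \Cref{fig:inf-rules:expr} and \Cref{fig:inf-rules:guards} shows that every rule is syntax-directed, with the only potentially overlapping cases (\textsc{eval-get} versus \textsc{eval-get-dft}) distinguished by the side condition on $|\pre|$. I would then lift this to the statement relation $\evalStmt$: all rules in \Cref{fig:inf-rules:stmts} are syntax-directed and deterministic by inspection, with the exception of \textsc{exec-parallel} and \textsc{exec-iterate}, which require a separate argument.

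The core of the proof is a \emph{non-interference lemma} about the translated program: whenever the translation $\translate$ places two tasks in the same layer, their StreamIR encodings operate on disjoint sets of memory cells. Intuitively, the dependency graph places tasks whose evaluations constrain each other into different layers, so two tasks sharing a layer never conflict on reads and writes of the same stream instance. With this lemma in hand, \textsc{exec-parallel} is deterministic because the two sequential orderings $(p_1~\syn{;}~p_2)$ and $(p_2~\syn{;}~p_1)$ commute on the relevant memory fragment, and \textsc{exec-iterate} is deterministic because the body only accesses $M[sr][p_i]$ for the active parameter $p_i$, making the enumeration order of the finite set $\{p_1, \dots, p_n\}$ irrelevant. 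Determinism of the step relation then follows from the uniqueness of $\min(D)$ combined with the deterministic update $\nextDlFn$, and the inner recursion on deadlines terminates because only finitely many deadlines lie in the interval up to the current $t$. Existence of at least one evaluation is obtained by exhibiting an applicable rule at every reachable configuration, using the fact that $M_0$ initializes the prefix of every input stream to $\epsilon$, marks every output instance as not spawned, and schedules every global frequency.

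The main obstacle will be the non-interference lemma. Its proof requires a read/write analysis for each syntactic family emitted by \Cref{fig:tranlate} (for $\syn{shift}$, $\syn{spawn}$, $\syn{eval}$, $\syn{close}$), together with a precise account of which memory cells are inspected by the guards $\syn{schedule}$, $\syn{dynamic}$, and $\syn{input?}$. One then has to argue that the partial order $\mathit{Layers}$ constructed from the dependency graph of $\varphi$ places into a single layer only tasks whose read sets and write sets are pairwise disjoint on the shared stream references. This is the technical heart of the theorem and ultimately where the correctness of the RTLola dependency analysis enters the argument; the remaining steps are largely bookkeeping on top of it.
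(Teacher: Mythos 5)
Your proposal is correct and follows essentially the same route as the paper's (much terser) proof sketch: existence by checking that every failable rule of the statement, expression, and guard semantics is applicable thanks to properties of $\varphi$, the layers, and the initial memory, and uniqueness by arguing that the order of instance iterations (and of parallel components) cannot affect the result, which the paper attributes to well-typedness of $\varphi$ and you formalize as a non-interference lemma on the read/write footprints of same-layer tasks derived from the dependency graph. Your additional details (determinism of $\min(D)$ and $\nextDlFn$, termination of the deadline recursion) are consistent refinements of that same argument rather than a different approach.
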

\begin{proof}
   The proof iterates over all failable inference rules for statements, expressions, and guards, and contradicts these rules to the well-definedness or well-typedness of $\varphi$, the list of layers or the initial memory to show the existence of an output.
   To prove that this output is unique, the proof uses the well-definedness of $\varphi$ and shows that the order of the instance iterations does not affect the computation, i.e., otherwise,  $\varphi$ would not be well-typed.	
\end{proof}

\section{Rewriting Rules}
\label{app:rewriting}

\newcommand{\rewriteRule}[3]{
	\hline\multicolumn{2}{l}{\textsc{#1}}\\\hline
	$\begin{aligned}#2\end{aligned}$ & $\begin{aligned}#3\end{aligned}$\\
}

\newcommand{\rewriteRuleIf}[4]{
	\hline\multicolumn{2}{l}{\textsc{#1}: #2}\\\hline
	$\begin{aligned}#3\end{aligned}$ & $\begin{aligned}#4\end{aligned}$\\
}

\subsection{General}

\begin{tabularx}{\linewidth}{X|X}
	\rewriteRuleIf{Common-If}{\syn{$\circ$} $\in$ \{~\syn{;}~,~\syn{$\|$}~\}}{
		&\syn{if}~c\land c_1~\syn{then}~A~\syn{$\circ$}~\syn{if}~c\land c_2~\syn{then}~B
	}{
		&\syn{if}~c~\syn{then}\\
		&\quad\syn{if}~c_1~\syn{then}~A~\syn{$\circ$}~\syn{if}~c_2~\syn{then}~B\\
	}
	\rewriteRuleIf{Nested-If}{\syn{$\circ$} $\in$ \{~\syn{;}~,~\syn{$\|$}~\}}{
		&(\syn{if}~c~\syn{then}~A)~\syn{$\circ$}~(\syn{if}~c~\syn{then}~B)
	}{
		&\syn{if}~c~\syn{then}~(A~\syn{$\circ$}~B)
	}
	\rewriteRule{Split-If}{
		&\syn{if}~c_1 \land c_2~\syn{then}~A
	}{
		&\syn{if}~c_1 ~\syn{then}~\syn{if}~c_2~\syn{then}~A
	}
	\rewriteRule{Combine-If}{
		&\syn{if}~c_1 ~\syn{then}~\syn{if}~c_2~\syn{then}~A
	}{
		&\syn{if}~c_1 \land c_2~\syn{then}~A
	}
	\rewriteRuleIf{Implied-If}{If $c_1 \Rightarrow c_2$}{
		&\syn{if}~c_1 ~\syn{then}~\ldots \syn{if}~c_2 ~\syn{then}~A~\ldots
	}{
		&\syn{if}~c_1~\syn{then}~\ldots A \ldots
	}
	\rewriteRuleIf{Associativity}{\syn{$\circ$} $\in$ \{~\syn{;}~,~\syn{$\|$}~\}}{
		&A~\syn{$\circ$}~(~B~\syn{$\circ$}~C~)
	}{
		&(~A~\syn{$\circ$}~B~)~\syn{$\circ$}~C
	}
	\rewriteRule{Swap-Par}{
		&A~\syn{$\|$}~B
	}{
		&B~\syn{$\|$}~A
	}
	\rewriteRuleIf{Iterate-Inside}{if $c$ does not contain \syn{self} and is not \syn{dynamic}}{
		&\syn{iterate}~o~(~\syn{if}~c~\syn{then}~A~)
	}{
		&\syn{if}~c~(~\syn{iterate}~o~A~)
	}
	\rewriteRuleIf{Unique-Assign}{if $c$ uniquely determines a parameter $p$}{
		&\syn{iterate}~o~(~\syn{if}~c~\syn{then}~A~)
	}{
		&\syn{assign}~p~(~\syn{if}~c~\syn{then}~A~)
	}
	\rewriteRuleIf{Combine-Iterate-Seq}{if $o1$ and $o2$ share spawn and close condition}{
		&(~\syn{iterate}~o1~A~)~\syn{;}~(\syn{iterate}~o2~B)
	}{
		&\syn{iterate}~o1~(A~\syn{;}~B~)
	}
	\hline
\end{tabularx}

\noindent
\begin{tabularx}{\linewidth}{X|X}
	\rewriteRuleIf{Combine-Iterate-Par}{if $o1$ and $o2$ share spawn and close condition}{
		&(~\syn{iterate}~o1~A~)~\syn{$\|$}~(\syn{iterate}~o2~B)
	}{
		&\syn{iterate}~o1~(A~\syn{$\|$}~B~)
	}
	\rewriteRuleIf{Single-Instance}{if $o$ has only a single instance}{
		&(~\syn{iterate}~o~A~)
	}{
		&A
	}
	\rewriteRule{If-True}{
		&\syn{if}~true~\syn{then}~A~\syn{else}~B
	}{
		&A
	}
	\rewriteRule{If-False}{
		&\syn{if}~false~\syn{then}~A~\syn{else}~B
	}{
		&B
	}
	\rewriteRuleIf{Remove-Skip-Seq}{\syn{$\circ$} $\in$ \{~\syn{;}~,~\syn{$\|$}~\}}{
		&\syn{A}~\syn{$\circ$}~\syn{skip}\quad\mid\quad\syn{skip}~\syn{$\circ$}~\syn{A}
	}{
		&\syn{A}
	}
	\rewriteRule{Remove-Skip-If}{
		&\syn{if}~c~\syn{skip}
	}{
		&\syn{skip}
	}
	\rewriteRule{Remove-Skip-Iterate}{
		&\syn{iterate}~o~\syn{skip}
	}{
		&\syn{skip}
	}
	\rewriteRule{Remove-Skip-Assign}{
		&\syn{assign}~p~\syn{skip}
	}{
		&\syn{skip}
	}
	\hline
\end{tabularx}

\subsection{Implementation Specific}

The following rewrite rules are based on assumptions about the implementation, such as the bounded memory being realized as a ring buffer:

\vspace{3mm}\noindent
\begin{tabularx}{\linewidth}{X|X}
	\rewriteRuleIf{Unecessary-Shift}{if $o$ has a memory bound $\le 1$}{
		&\syn{shift}~o
	}{
		&\syn{skip}
	}
	\rewriteRuleIf{Unecessary-Spawn}{if $o$ has a single instance without spawn condition}{
		&\syn{spawn}~o
	}{
		&\syn{skip}
	}
	\hline
\end{tabularx}

\subsection{Guards}

\vspace{3mm}\noindent
\begin{tabularx}{\linewidth}{X|X}
	\rewriteRule{And-True}{
		&c~\land~true\;|\;true~\land~c
	}{
		&c
	}
	\rewriteRule{And-False}{
		&c~\land~false\;|\;false~\land~c
	}{
		&false
	}
	\rewriteRule{Or-True}{
		&c~\lor~true\;|\;true~\lor~c
	}{
		&true
	}
	\rewriteRule{Or-False}{
		&c~\lor~false\;|\;false~\lor~c
	}{
		&c
	}
	\rewriteRuleIf{Dynamic-Op}{$\circ$~$\in$~\{~$\land$~,~$\lor$~\}}{
		&\syn{dynamic}~(c_1~\circ~c_2)
	}{
		&\syn{dynamic}~c_1~\circ~\syn{dynamic}~c_2
	}
	\hline
\end{tabularx}

\section{Contracts}
\label{app:contracts}

For the evaluation, we only list functions updating the state of the contract.
Getter functions are ommited.

\subsubsection{Order Contract~\cite{azzopardi2018monitoring,Scheerer/21}} \begin{enumerate}
	\item \emph{acceptContract:} Both parties, the buyer and seller, need to accept the contract before the placement of the first order.
	\item \emph{deliveryMade:} Marks an order as delivered by the seller.
	\item \emph{escrow:} Before accepting the contract, the buyer has to pay an escrow of a minimum number of items, while the seller has to pay an performance guarantee.
	\item \emph{paymentReceived:} The seller marks the payment of an order as received.
	\item \emph{placeOrder:} The buyer places a new order.
	\item \emph{requestTermination:} Either party can request termination of the contract.
\end{enumerate}

\subsubsection{ERC20 Contract~\cite{ERC20Reference}} \begin{enumerate}
\item \emph{approve:} Permit another account to transfer tokens from this account.
\item \emph{transfer:} Transfer tokens from your account to someone elses.
\item \emph{transferFrom:} Transfer tokens from another account, if it was approved.
\end{enumerate}

\subsubsection{Voting Contract~\cite{VotingReference}} \begin{enumerate}
\item \emph{endVoting:} End the voting phase.
\item \emph{giveRightToVote:} Assign someone the right to vote.
\item \emph{startVoting:} Start the voting phase.
\item \emph{voteFor:} Cast your vote.
\end{enumerate}

\subsubsection{Crowdfunding Contract~\cite{CrowdFundingReference}} \begin{enumerate}
\item \emph{bid:} Contribute money to the fund.
\item \emph{claim:} If goal was reached, claim the funded money.
\item \emph{refund:} If goal was not reached, refund your contribution.
\end{enumerate}

\subsubsection{Auction Contract~\cite{SimpleAuctionReference}} \begin{enumerate}
\item \emph{bid:} Bit to the auction.
\item \emph{end:} Mark the auction as ended.
\item \emph{withdraw:} Withdraw money if previous bid was overbid.
\end{enumerate}

\section{Evaluation Results}
\label{app:runtimes}
The following tables displays the mean runtime over 20 runs across the different parameters.
They correspond to the runtimes depicted in figures \Cref{fig:eval:geofence} and \Cref{fig:eval:intruder} respectively.

\subsection{Geofence Specification}
\setlength{\tabcolsep}{8pt}
\begin{center}
\begin{tabular}{rrrrrr}
\toprule
parameter & \multicolumn{5}{c}{Runtime [ms]}\\\cmidrule(lr){2-6}
& ip\_old & ip\_new & ip\_new\_opt & rustc & rustc\_opt \\
\midrule
1 & 25.7 & 21.0 & 12.8 & 7.8 & 6.2 \\
2 & 29.4 & 25.3 & 13.3 & 7.2 & 6.7 \\
3 & 33.1 & 29.0 & 14.1 & 5.6 & 5.2 \\
4 & 36.7 & 32.6 & 14.5 & 7.1 & 7.1 \\
5 & 40.2 & 36.2 & 15.1 & 5.8 & 5.3 \\
6 & 43.7 & 39.5 & 15.7 & 7.3 & 6.8 \\
7 & 47.3 & 42.9 & 16.3 & 6.8 & 5.6 \\
8 & 50.7 & 45.8 & 16.8 & 6.1 & 5.3 \\
9 & 54.2 & 48.7 & 17.3 & 6.8 & 5.9 \\
10 & 57.9 & 51.9 & 18.1 & 6.4 & 5.6 \\
11 & 61.3 & 54.9 & 18.5 & 7.4 & 6.6 \\
12 & 64.8 & 58.1 & 19.0 & 7.2 & 5.7 \\
13 & 68.3 & 61.4 & 19.6 & 8.4 & 6.8 \\
14 & 72.1 & 64.5 & 20.2 & 7.0 & 5.4 \\
15 & 75.4 & 67.7 & 20.7 & 8.5 & 7.1 \\
\bottomrule
\end{tabular}
\end{center}

\subsection{Intruder Specification}

\begin{center}
\begin{tabular}{rrrrrr}
\toprule
parameter & \multicolumn{5}{c}{Runtime [ms]}\\\cmidrule(lr){2-6}
& ip\_old & ip\_new & ip\_new\_opt & rustc & rustc\_opt \\
\midrule
50 & 135.9 & 98.1 & 26.6 & 17.5 & 8.2 \\
100 & 222.2 & 162.4 & 26.9 & 27.5 & 9.9 \\
150 & 312.8 & 226.4 & 27.0 & 38.3 & 11.7 \\
200 & 396.7 & 286.7 & 27.2 & 45.8 & 13.9 \\
250 & 484.6 & 349.5 & 27.2 & 58.0 & 15.8 \\
300 & 574.5 & 406.5 & 27.3 & 65.9 & 19.9 \\
350 & 663.2 & 465.8 & 27.7 & 74.9 & 22.3 \\
400 & 757.1 & 519.6 & 27.8 & 84.3 & 24.5 \\
450 & 851.2 & 587.0 & 27.4 & 95.0 & 29.0 \\
500 & 929.6 & 641.6 & 27.4 & 104.8 & 31.3 \\
\bottomrule
\end{tabular}
\end{center}
}{}

\end{document}